\title{On the Computational Complexity of Linear Discrepancy}
\author{Lily Li\thanks{Department of Computer Science, University of  Toronto, email:\url{xinyuan@cs.toronto.edu}}, Aleksandar Nikolov\thanks{Department of Computer Science, University of Toronto, email:\url{anikolov@cs.toronto.edu}}}
\begin{document}
\maketitle
\begin{abstract}
  Many problems in computer science and applied mathematics require rounding a vector $\vv{w}$ of fractional values lying in the   interval $[0,1]$ to a binary vector $\vv{x}$ so that, for a given matrix $\mm{A}$, $\mm{A}\vv{x}$ is as close to $\mm{A}\vv{w}$ as possible. For example, this problem arises in LP rounding algorithms used to approximate $\class{NP}$-hard optimization problems and in the design of uniformly distributed point sets for numerical integration. For a given matrix $\mm{A}$, the worst-case error over all choices of $\vv{w}$ incurred by the best possible rounding is measured by the linear discrepancy of $\mm{A}$, a quantity studied in discrepancy theory, and introduced by Lovasz, Spencer, and Vesztergombi (EJC, 1986).

  We initiate the study of the computational complexity of linear discrepancy. Our investigation proceeds in two directions: (1) proving hardness results and (2) finding both exact and approximate algorithms to evaluate the linear discrepancy of certain matrices. For (1), we show that linear discrepancy is $\class{NP}$-hard. Thus we do not expect to find an efficient exact algorithm for the general case. Restricting our attention to matrices with a constant number of rows, we present a poly-time exact algorithm for matrices consisting of a single row and matrices with a constant number of rows and entries of bounded magnitude. We also present an exponential-time approximation algorithm for general matrices, and an algorithm that approximates linear discrepancy to within an exponential factor. 
\end{abstract}
\pagebreak

\pagebreak

\section{Introduction}

A number of questions in mathematics and computer science can be
reduced to the following basic rounding question: given a
vector $\vv{w}\in [0,1]^n$, and an $m\times n$ matrix $\mm{A}$, find
an integer vector $\vv{x} \in [0,1]^n$ such that $\mm{A}\vv{x}$ is as
close as possible to $\mm{A}\vv{w}$. For example, many
$\class{NP}$-hard optimization problems can be modeled as an integer
program
\begin{align*}
	\min\quad &\vv{c}^\top \vv{x}\\
	\text{s.t.}\quad &\mm{A}\vv{x} \ge \vv{b}\\
	&\vv{x} \in \{0,1\}^n
\end{align*}
This integer program can be relaxed to a linear program by replacing
the integer variables $\vv{x}\in \{0,1\}^n$ with real-valued variables
$\vv{w} \in [0,1]^n$. A powerful method in approximation algorithms is
to solve this linear programming relaxation to get an optimal
$\vv{w}$, and then round $\vv{w}$ to an integer solution $\vv{x}$
which is feasible (i.e., $\mm{A}\vv{x} \ge \vv{b}$), and has objective
value not much bigger than $\vv{c}^\top \vv{w}$. Often, a useful
intermediate step is to guarantee that $\vv{x}$ is approximately
optimal, i.e., that the coordinates of $\vv{b} - \mm{A}\vv{x}$ are
bounded from above. This approximately feasible solution can then,
hopefully, be turned into a truly feasible one with a small loss in
the objective value. This method was used, for example, by
Rothvoss~\cite{rothvoss2013approximating}, and Rothvoss and
Hoberg~\cite{hoberg2017logarithmic} to give the best known
approximation algorithm for the bin packing problem.

Another example is provided by the problem of constructing uniformly
distributed points, or, more generally, points that are
well-distributed with respect to some measure. Variants of this
problem date back to work by Weyl, van der Corput, van
Aardenne-Ehrenfest, and Roth, and have important applications to such
fields as numerical integration; see the book of
Matou\v{s}ek~\cite{matouvsek1999geometric} for references and an
introduction to the area. In the classical setting, the problem is
to find, for any positive integer $n$, a set of $n$ points $P$ in
$[0,1]^d$, so as to minimize the quantity
\[
  \sup_{R \in \mathcal{R}_d} ||R \cap P| - n \lambda_d(R)|,
\]
where $\mathcal{R}_d$ is the set of all axis-aligned boxes contained
in $[0,1]^d$, and $\lambda_d$ is the Lebesgue measure on $\RR^d$. The quantity
above is known as the (unnormalized) discrepancy of $P$. Note that if
we sample a random point uniformly from $P$, then it would land in $R$
with probability $\frac{|R \cap P|}{n}$; on the other hand, if we
sample a random point uniformly from $[0,1]^d$, then it would land in
$R$ with probability $\lambda_d(R)$. The problem of minimizing the
discrepancy of $P$ is then equivalent to finding a distribution that
is uniform over $n$ points that ``looks the same'' as the continuous
uniform distribution to all boxes $R$.

The discrepancy minimization
problem can be modeled by the rounding problem with which we started
our discussion. To that end, we can discretize the domain $[0,1]^d$ to
a finite set $X$ of size $N$, and let $\mm{A}$ be the incidence matrix
of sets induced by axis-aligned boxes, i.e., each row of $\mm{A}$ is
associated with a box $R$, and equals the indicator vector of
$R\cap X$. If we also let $\vv{w} = \frac{n}{N}\ind{1}$, where
$\ind{1}$ is the all-ones vector, then, for a sufficiently fine
discretization, each coordinate of $\mm{A}\vv{w}$ is a close
approximation of $n\lambda_d(R)$. The problem of finding an $n$-point
set $P$ of minimum discrepancy then becomes essentially equivalent to
minimizing $\|\mm{A}\vv{x} - \mm{A}\vv{w}\|_\infty$ over
$\vv{x} \in \{0,1\}^N$, where $\|\cdot\|_\infty$ is the standard
$\ell_\infty$ norm. In particular, given $\vv{x}$, we can take $P$ to consist of the points in $X$ for which the corresponding coordinate in $\vv{x}$ is
set to $1$. Then, since $[0,1]^d \in \mathcal{R}_d$, we have
$\left||P| - n\right| \le \|\mm{A}\vv{x} - \mm{A}\vv{w}\|_\infty$ and
we can remove or add at most $\|\mm{A}\vv{x} - \mm{A}\vv{w}\|_\infty$
to $P$ to make it exactly of size $n$. The discrepancy of $P$ is then
bounded by $2 \|\mm{A}\vv{x} - \mm{A}\vv{w}\|_\infty$ plus the
additional error incurred by the discretization of $[0,1]^d$. 

These two examples motivate the definition of \emph{linear
  discrepancy}, initially introduced by Lov\'asz, Spencer, and
Vesztergombi~\cite{lovasz1986discrepancy}. The smallest possible error for rounding $\vv{w}$ with
respect to $\mm{A}$ is
\[
	\lindisc(\mm{A}, \vv{w}) = \min_{\vv{x} \in \{0,1\}^{n}} \norm{\mm{A}\left(\vv{w} - \vv{x}\right)}_{\infty}.
\]
This is the linear discrepancy of $\mm{A}$ with respect to
$\vv{w}$. The linear discrepancy of $\mm{A}$ is defined as
the worse case over all $\vv{w}\in [0,1]^n$ i.e.
\begin{equation}
  \label{eq:lindisc}
  \lindisc(\mm{A}) = \max_{\vv{w} \in [0,1]^n} \lindisc(\mm{A}, \vv{w}). %
\end{equation}
It will be useful to consider a maximizer of equation
\eqref{eq:lindisc} i.e. $\mm{w}^* \in [0,1]^n$ such that
$\lindisc(\mm{A}, \mm{w}^*) = \lindisc(\mm{A})$. We call $\vv{w^*}$ a
\emph{deep-hole} of $\mm{A}$. Every $\mm{A}$ has at least one
deep-hole, since linear discrepancy is a continuous function over the
compact set $[0,1]^n$.

The special case of $\lindisc(\mm{A}, \vv{w})$ when $\vv{w} = \frac12 \ind{1}$ is especially
well studied. When $\mm{A}$ is the indicator matrix of a collection
$\family{S}$ of $m$ subsets of a universe $X$, $\lindisc(\mm{A}, \frac12 \ind{1})$ measures
to what extent it is possible to choose a subset $S$ of $X$ that
contains approximately half the elements of each set. This is a
rescaling of the well-known combinatorial discrepancy of $\family{S}$,
defined as
\[\disc(\family{S}) = \min_{\chi:X \to \{-1, +1\}}
  \max_{S \in \family{S}}\left|\sum_{s \in S} \chi(s)\right|
\]
It is straightforward to check that, by a change of variables,
$\disc(\family{S}) = 2\cdot\lindisc(\mm{A}, \frac12 \ind{1})$ where,
again, $\mm{A}$ is the incidence matrix of $\family{S}$. This
definition can be extended to arbitrary matrices $\mm{A}$ as
$\disc(\mm{A}) = 2\cdot\lindisc(\mm{A}, \frac12 \ind{1})$. Combinatorial
discrepancy has been widely studied in combinatorics and computer
science, see~\cite{beck1996discrepancy, chazelle2001discrepancy, matouvsek1999geometric}.

Sometimes $\disc(\mm{A})$ can be small ``by accident", thus it is
useful to define a more robust discrepancy variant.\footnote{Consider
  any matrix $\mm{B} \in \RR^{m \times n}$ and let
  $\mm{A} \in \RR^{m \times 2n}$ be the concatenation of two copies of
  $\mm{B}$ side by side. Regardless of the discrepancy of $\mm{B}$,
  $\disc(\mm{A}) = 0$ since there exists $\vv{x} \in \{-1, 1\}^n$ such
  that $\norm{\mm{A}\vv{x}}_{\infty} = 0$, namely
  \[\vv{x}^{\intercal} = [\underbrace{-1, ..., -1}_{n}, \underbrace{1, ..., 1}_{n}].\]}		
The hereditary discrepancy of $\mm{A}$ is the maximum discrepancy over all sub-matrices, i.e.,
\begin{equation}
  \label{eq:hereditarydiscrepancy}
  \herdisc(\mm{A}) = \max_{\mm{B}} \disc(\mm{B}),
\end{equation}
where $\mm{B}$ ranges over submatrices of $\mm{A}$.

A fundamental theorem by Lov\'asz, Spencer, and Vesztergombi shows that linear
discrepancy can be bounded above by twice the hereditary discrepancy.
\begin{theorem}
  \label{eq:lindiscubstandardherdisc}
  (Lov\'{a}sz et al. 1986, \cite{lovasz1986discrepancy}) $\lindisc(\mm{A}) \leq 2\cdot\herdisc(\mm{A})$.
\end{theorem}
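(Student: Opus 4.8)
The plan is to establish the stronger inequality $\lindisc(\mm{A}) \le \herdisc(\mm{A})$, which immediately implies the stated bound since $\herdisc(\mm{A}) \ge 0$, by an iterative ``bit-by-bit'' rounding of a candidate deep-hole, paying one discrepancy coloring per binary digit. A preliminary reduction lets me assume $\vv{w}$ is dyadic: for fixed $\mm{A}$ the map $\vv{w} \mapsto \lindisc(\mm{A}, \vv{w})$ is a minimum of finitely many Lipschitz functions, hence continuous, and the dyadic vectors are dense in $[0,1]^n$, so it suffices to bound $\lindisc(\mm{A}, \vv{w})$ whenever every entry of $\vv{w}$ is an integer multiple of $2^{-k}$, and then pass to the limit. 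Rename such a starting vector $\vv{w}^{(k)}$.

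Next I would build a sequence $\vv{w}^{(k)}, \vv{w}^{(k-1)}, \dots, \vv{w}^{(0)}$ in which each $\vv{w}^{(j)} \in [0,1]^n$ has all entries equal to integer multiples of $2^{-j}$, so that the final $\vv{w}^{(0)}$ lies in $\{0,1\}^n$ and serves as the rounding of $\vv{w}$. Given $\vv{w}^{(j)}$, let $J \subseteq \{1, \dots, n\}$ collect the coordinates $i$ for which $w^{(j)}_i$ is an \emph{odd} multiple of $2^{-j}$, i.e.\ those whose $j$-th bit must be cleared. Each such entry lies in $[2^{-j}, 1 - 2^{-j}]$, so it can be shifted by either $+2^{-j}$ or $-2^{-j}$ while staying in $[0,1]$, and either shift makes it a multiple of $2^{-(j-1)}$. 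Writing $\mm{A}_J$ for the submatrix of $\mm{A}$ on the columns in $J$, the definition of combinatorial discrepancy supplies a coloring $\chi \in \{-1,+1\}^{|J|}$ with $\norm{\mm{A}_J \chi}_{\infty} \le \disc(\mm{A}_J) \le \herdisc(\mm{A})$; I then set $w^{(j-1)}_i = w^{(j)}_i + \chi_i 2^{-j}$ for $i \in J$ and leave the other coordinates fixed.

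Each step changes the image by $\mm{A}(\vv{w}^{(j-1)} - \vv{w}^{(j)}) = 2^{-j} \mm{A}_J \chi$, so its $\ell_\infty$ norm is at most $2^{-j}\herdisc(\mm{A})$. Telescoping across the steps and summing the geometric series gives
\[
  \norm{\mm{A}(\vv{w}^{(k)} - \vv{w}^{(0)})}_{\infty} \le \sum_{j=1}^{k} 2^{-j}\,\herdisc(\mm{A}) < \herdisc(\mm{A}).
\]
Since $\vv{w}^{(0)} \in \{0,1\}^n$, this bounds $\lindisc(\mm{A}, \vv{w})$ for every dyadic $\vv{w}$; passing to the limit and then taking the maximum over $\vv{w} \in [0,1]^n$ yields $\lindisc(\mm{A}) \le \herdisc(\mm{A}) \le 2\,\herdisc(\mm{A})$.

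The substance lies in two places. The easy-to-overlook point is that the entries being rounded always sit strictly inside $[0,1]$, which keeps every partial rounding feasible; without it the telescoping could push a coordinate out of range. The real crux is that the active set $J$ at each step indexes a genuine submatrix $\mm{A}_J$, so the per-step error is governed by $\disc(\mm{A}_J)$ and hence, uniformly over all steps, by $\herdisc(\mm{A})$ --- it is exactly the \emph{hereditary} discrepancy, rather than the plain discrepancy of $\mm{A}$, that controls every submatrix encountered along the way. The dyadic reduction and the closing limit are routine, but the passage from a strict bound at each dyadic point to the bound on the maximum should be spelled out carefully via continuity.
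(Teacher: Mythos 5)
The paper does not actually prove this statement --- it is quoted from Lov\'asz, Spencer, and Vesztergombi with a citation --- so the comparison here is between your argument and the classical one, and your argument is essentially the classical ``transference'' proof: round a dyadic $\vv{w}$ bit by bit, at each bit level applying an optimal coloring to the column submatrix $\mm{A}_J$ indexed by the coordinates whose current bit is odd, and telescope. Each step in your write-up checks out: $\lindisc(\mm{A},\cdot)$ is a minimum of finitely many Lipschitz functions and hence Lipschitz, so the dyadic reduction and the closing limit are sound; an odd multiple of $2^{-j}$ in $[0,1]$ indeed lies in $[2^{-j}, 1-2^{-j}]$, so both shift directions stay feasible; $\disc(\mm{A}_J) \le \herdisc(\mm{A})$ is covered by the paper's definition of $\herdisc$ as a maximum over \emph{all} submatrices, column restrictions included; and the per-step error $2^{-j}\norm{\mm{A}_J\chi}_\infty \le 2^{-j}\herdisc(\mm{A})$ sums to $(1-2^{-k})\herdisc(\mm{A}) < \herdisc(\mm{A})$. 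So you in fact prove the sharper inequality $\lindisc(\mm{A}) \le \herdisc(\mm{A})$, which trivially implies the stated theorem. This is not a contradiction with the classical constant $2$: in the original formulation linear discrepancy is normalized by rounding $\vv{w} \in [-1,1]^n$ to $\vv{x} \in \{-1,+1\}^n$, which doubles the scale relative to this paper's $[0,1]^n \to \{0,1\}^n$ convention (note the paper's own identity $\disc(\mm{A}) = 2\lindisc(\mm{A}, \frac12\ind{1})$ reflecting the same factor); your shifts of $\pm 2^{-j}$ would become shifts of $\pm 2^{-j+1}$ there, and the geometric series then sums to $2\herdisc(\mm{A})$, recovering the classical statement exactly. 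In short: the proof is correct, it is the standard argument the cited theorem rests on, and the only thing worth adding is an explicit remark that the factor of $2$ in the theorem as stated is an artifact of the $\{-1,+1\}$ normalization rather than a loss inherent to the bit-by-bit rounding.
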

A number of the applications of combinatorial discrepancy use this
basic theorem. In particular, it is common to give an upper bound on
the hereditary discrepancy, and from that deduce an upper bound on the
linear discrepancy. For example, this strategy was used to give
approximation algorithms for bin
packing~\cite{rothvoss2013approximating,hoberg2017logarithmic}, and broadcast scheduling~\cite{BansalKN14},
and to design point sets well distributed with respect to arbitrary
Borel measures~\cite{nikolov17tusnady,transference}. However, linear discrepancy can be
much smaller (by a factor of at least $2^n$) than hereditary
discrepancy,\footnote{Whether this remains true if the matrix $\mm{A}$
  has bounded entries is a tantalizing open question.} so hereditary
discrepancy lower bounds do not translate to linear discrepancy, and,
in general, linear discrepancy lower bounds appear to be
challenging. 
Arguably, a better understanding of linear discrepancy
itself would allow proving more and tighter results, in comparison with going
through hereditary discrepancy. For example, it is likely that new
analytic tools to estimate linear discrepancy would allow progress on
questions in geometric discrepancy theory, as well as questions about the
integrality gaps of linear programming relaxations of important
optimization problems, such as the bin packing problem. 

A sequence of recent works has shed light on the computational
complexity of combinatorial and hereditary discrepancy. It is now
known that combinatorial discrepancy does not allow efficient
approximation algorithms, even in a weak sense (assuming
$\class{P} \neq \class{NP}$)\cite{charikar2011tight}, while hereditary
discrepancy is $\class{NP}$-hard to approximate better than a factor
of two~\cite{austrin20172+varepsilon}, and can be approximated within
poly-logarithmic factors~\cite{matousek18factorization}. Despite being the tool most
directly relevant to many applications of discrepancy, however,
essentially nothing is known about the computational complexity of
linear discrepancy itself. In this paper, we initiate the study of
linear discrepancy from a computational viewpoint, and give both
the first hardness results, as well as the first exact and approximate
algorithms for it.

Before stating our results, it is worth mentioning that linear
discrepancy can also be seen as an analogue of the covering radius in
lattice theory. Let $\lattice \subset \RR^n$ be a lattice,
i.e.~discrete additive subgroup of $\RR^n$, and let us choose
$\vv{b}_1, \ldots, \vv{b}_n$ to be a basis of $\lattice$. Let $\mm{B}$
be a matrix with the $\vv{b_i}$ as its columns. The
covering radius of $\lattice$ in the $\metric{p}$-norm is
defined as
\begin{equation}
  \label{eq:coveringradius}
  \rho(\lattice)
  = \max_{\vv{y} \in \RR^n}\min_{\vv{z} \in     \lattice}\norm{\vv{y} - \vv{z}}_p
  = \max_{\vv{w} \in \RR^n}\min_{\vv{x} \in \ZZ^{n}}\norm{\mm{B} \cdot (\vv{w}-\vv{x})}_p
  = \max_{\vv{w} \in [0,1]^n}\min_{\vv{x} \in \ZZ^{n}}\norm{\mm{B} \cdot (\vv{w}-\vv{x})}_p,
\end{equation}
and is independent of the basis. This definition is equivalent to the
the definition of $\lindisc(\mm{A})$, except that the minimum is over
$\ZZ^n$ rather than $\{0,1\}^n$. Haviv and Regev showed that the
covering radius problem ($\class{CRP}$) in the $\metric{p}$-norm is
$\Pi_2$-hard to approximate within some fixed
constant for all large
enough $p$~\cite{haviv2006hardness}, and Guruswami, Micciancio, and Regev
showed it
can be approximated within a factor of $2^{O(n\log n / \log \log n)}$
for the case of $p =2$~\cite{guruswami2005complexity}.

\subsection{Our Results} 

Let us start with the simple observation that, when $\mm{A}$ is a
single row matrix, deciding $\lindisc(\mm{A}, t\ind{1}) = 0$
is the $\class{NP}$-hard Subset Sum problem with target sum
$t \sum_{j = 1}^n{A_{1,j}}$, and is, therefore,
$\class{NP}$-hard. This does not show, however, that computing
$\lindisc(\mm{A})$ is $\class{NP}$-hard. In this work we show the
following hardness result for linear discrepancy.
\begin{theorem}
  \label{thm:hardness}
  The Linear Discrepancy problem of deciding, given an $m\times n$
  matrix $\mm{A}$ with rational entries, and a rational number $t$,
  whether $\lindisc(\mm{A}) \le t$, is $\class{NP}$-hard and is
  contained in the class $\Pi_2$.
\end{theorem}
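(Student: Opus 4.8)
The plan is to treat the two assertions separately, as they call for very different arguments; I will establish membership in $\class{NP}$'s second level first, then the hardness.

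For containment in $\Pi_2$, I would start by simply unfolding the definition: $\lindisc(\mm{A}) \le t$ holds if and only if $\forall \vv{w} \in [0,1]^n\; \exists \vv{x} \in \{0,1\}^n : \norm{\mm{A}(\vv{w}-\vv{x})}_\infty \le t$. The inner predicate is checkable in polynomial time once $\vv{w}$ and $\vv{x}$ are presented, and the existential quantifier already ranges over a set, $\{0,1\}^n$, whose elements have linear description length. The single obstacle to a genuine $\Pi_2$ certificate is that the universal quantifier ranges over the \emph{continuous} cube $[0,1]^n$. I would remove it by proving that some deep hole $\vv{w}^*$ has bit-complexity polynomial in the input size, so that the outer quantifier may be restricted to rational vectors of polynomial description length, with the verifier instructed to accept automatically whenever the decoded vector falls outside $[0,1]^n$.

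To bound the deep hole, write $g(\vv{w}) = \min_{\vv{x} \in \{0,1\}^n} \max_i |\vv{a}_i\cdot(\vv{w}-\vv{x})|$, where $\vv{a}_i$ is the $i$-th row; this is piecewise linear in $\vv{w}$. I would refine $[0,1]^n$ by the arrangement of all hyperplanes on which two of the affine forms $\pm\,\vv{a}_i\cdot(\vv{w}-\vv{x})$ coincide, together with the facets of the cube. On each cell of this refinement every maximizing index and sign, and also the minimizing $\vv{x}$, is constant, so $g$ is linear there; hence $g$ attains its maximum at a vertex of the arrangement, i.e.\ at the solution of $n$ linearly independent equations whose coefficient vectors are differences of rows of $\mm{A}$ (or coordinate indicators) and whose right-hand sides are integer combinations of entries of $\mm{A}$. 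By Cramer's rule and Hadamard's determinant bound, such a vertex is rational with numerators and denominators of polynomial bit-length. This yields the desired deep hole and completes membership in $\Pi_2$; the only delicate point is the quantitative control of these piecewise-linear breakpoints.

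For $\class{NP}$-hardness I would reduce from an $\class{NP}$-complete number problem, concretely \textsc{Partition} (equivalently \textsc{Subset Sum}), exploiting the observation already noted in the excerpt: for a \emph{fixed} $\vv{w}$, deciding $\lindisc(\mm{A},\vv{w}) \le t$ is $\class{NP}$-hard via subset sums. The difficulty is that $\lindisc(\mm{A})$ is the \emph{maximum} over all $\vv{w}$, so a naive single-row encoding is washed out — indeed, the paper's own algorithmic results show that single-row linear discrepancy is computable in polynomial time. The strategy is therefore to augment the subset-sum row $\vv{a}$ (scaled so that $\vv{a}\cdot\tfrac12\ind{1}$ equals the partition target) with auxiliary ``localizing'' rows, such as suitably weighted multiples of the coordinate vectors $\vv{e}_j$, designed to shape the discrepancy landscape so that its global maximum is forced to the target point $\tfrac12\ind{1}$. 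At that point the localizing rows are ``saturated''— each contributes the same cost whether $x_j$ is rounded to $0$ or to $1$— so the choice of $\vv{x}$ is left free to certify or refute the subset-sum instance.

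The crux, and the step where I expect the real work, is the quantitative balancing of these weights so that simultaneously (i) the maximum of $\lindisc(\mm{A}',\cdot)$ is attained exactly at the target and equals the combinatorial quantity (the distance from the target sum to the nearest subset sum), and (ii) every other $\vv{w}$ is strictly cheaper to round, so that the partition answer is not masked by discrepancy manufactured elsewhere in the cube. This is subtle precisely because adding rows can only raise $\lindisc(\mm{A},\vv{w})$ pointwise, so the localization must arise from the interaction of the rows rather than from any single dominating row. Once this is arranged, deciding $\lindisc(\mm{A}') \le t$ reads off the \textsc{Partition} answer, and $\class{NP}$-hardness follows.
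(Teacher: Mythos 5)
Your $\Pi_2$ membership argument is sound and reaches the same lemma as the paper by a mildly different route. The paper (Lemma~\ref{lem:certificateinpi2}) writes $\lindisc(\mm{A})$ as the optimum of a linear program in $n+1$ variables with one constraint per $\vv{x}\in\{0,1\}^n$ (using the row and sign attaining the discrepancy at a deep hole), and invokes the standard bound on the bit complexity of basic optimal solutions; you instead take a vertex of the hyperplane arrangement on which the piecewise-linear function $g(\vv{w})$ is affine and apply Cramer's rule with Hadamard's bound. Both arguments rest on the same fact --- a point cut out by $n$ (or $n+1$) hyperplanes, each of polynomial description length, has polynomial bit complexity, and the exponential number of constraints or hyperplanes is harmless --- so this half is essentially the paper's proof in different packaging, and it is correct.

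The $\class{NP}$-hardness half, however, has a genuine gap, in two respects. First, it is a plan rather than a proof: the entire content --- choosing the localizing rows and balancing their weights so that your conditions (i) and (ii) hold --- is explicitly deferred as ``the real work.'' Second, condition (i) is unachievable as stated. You ask that $\lindisc(\mm{A}')$ equal the distance from the target sum to the nearest subset sum, which is $0$ on YES instances of \emph{Partition}; but if $\mm{A}'$ contains the nonzero row $\vv{a}$, then $\lindisc(\mm{A}') \ge \lindisc([\vv{a}]) > 0$, since $\min_{\vv{x}}\norm{\mm{A}'(\vv{w}-\vv{x})}_\infty \ge \min_{\vv{x}}|\vv{a}\cdot(\vv{w}-\vv{x})|$ for every $\vv{w}$, and the single-row value is half the largest gap between consecutive subset sums. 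Worse, the proposed localizing rows work against you: a row $M\vv{e}_j^\top$ costs $M/2$ at $w_j = \frac12$ under \emph{every} rounding, so ``saturated'' rows manufacture discrepancy exactly at your target point and mask the subset-sum signal unless $M$ is small, in which case they do not localize --- and, as you note yourself, extra rows can only raise $\lindisc(\mm{A},\vv{w})$ pointwise, so there is no mechanism to make every other $\vv{w}$ strictly cheaper. The paper sidesteps all of this by reducing from Monotone NAE-3SAT instead of Partition: $\mm{A}$ is simply the incidence matrix of the CNF with threshold $t = \frac32 - \epsilon$; a NO instance forces $\lindisc(\mm{A}, \frac12\ind{1}) \ge \frac32$, while in a YES instance a satisfying NAE assignment yields a good rounding \emph{at every} $\vv{w}^*$ --- round coordinates with $w_i^* \ne \frac12$ to the nearest integer and use the assignment on the half coordinates --- with $\epsilon$ chosen via the deep-hole bit-complexity bound. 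The missing idea in your plan is precisely such a rounding guarantee uniform over all of $[0,1]^n$ (which NAE-satisfiability supplies), rather than an attempt to pin the deep hole to a single point.
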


We present algorithms for computing linear discrepancy exactly when
the matrix $\mm{A}$ has a constant number of rows. We start with a
result for a single row matrix.
\begin{theorem}
  \label{thm:exact-one-row}
  For any matrix $\mm{A} \in \QQ^{1 \times n}$, $\lindisc(\mm{A})$ can be computed in time $O(n\log n)$. 
\end{theorem}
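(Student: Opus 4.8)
The plan is to reduce computing $\lindisc(\mm{A})$ for a single row to locating the widest gap between consecutive subset sums of the entries, and then to give a closed form for that gap that is computable after one sort. First I would simplify the instance. A one-row matrix is a vector $\vv{a} \in \QQ^n$ with entries $a_i$, and since the $\ell_\infty$-norm of a scalar is its absolute value, $\lindisc(\vv{a}, \vv{w}) = \min_{\vv{x} \in \{0,1\}^n} |\vv{a}^\top(\vv{w} - \vv{x})|$. For every coordinate $i$ with $a_i < 0$, the substitution $w_i \mapsto 1 - w_i$ and $x_i \mapsto 1 - x_i$ is a bijection of $[0,1] \times \{0,1\}$ that preserves $|a_i(w_i - x_i)|$ while flipping the sign of $a_i$; hence $\lindisc(\vv{a}) = \lindisc(|\vv{a}|)$, and coordinates with $a_i = 0$ can be dropped. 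I may therefore assume $a_1 \ge a_2 \ge \cdots \ge a_n > 0$.

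Next I would reformulate the quantity. As $\vv{w}$ ranges over $[0,1]^n$ the scalar $t = \vv{a}^\top \vv{w}$ ranges over exactly the interval $[0, U]$, where $U = \sum_i a_i$, and for a fixed $t$ we have $\min_{\vv{x}} |t - \vv{a}^\top \vv{x}| = \mathrm{dist}(t, \mathcal{S})$, where $\mathcal{S} = \{\vv{a}^\top \vv{x} : \vv{x} \in \{0,1\}^n\}$ is the set of subset sums. Since $0$ and $U$ both lie in $\mathcal{S}$ and the range of $t$ is exactly $[0,U]$, a deep-hole target sits at the midpoint of the widest gap between consecutive elements of $\mathcal{S}$, so $\lindisc(\vv{a}) = g/2$, where $g$ is that largest gap.

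The technical heart is a formula for $g$: I claim $g = \max_{1 \le k \le n} \max(0,\, a_k - \sum_{\ell > k} a_\ell)$. I would prove this by inserting the entries in increasing order of magnitude. Let $b_1 \le \cdots \le b_n$ be the sorted entries and let $\mathcal{S}_j$ be the subset sums of $\{b_1, \ldots, b_j\}$, so that $\mathcal{S}_j = \mathcal{S}_{j-1} \cup (\mathcal{S}_{j-1} + b_j)$ and $\max \mathcal{S}_{j-1} = \sum_{i < j} b_i$. If $b_j \le \max \mathcal{S}_{j-1}$ the shifted copy $\mathcal{S}_{j-1} + b_j$ overlaps $\mathcal{S}_{j-1}$ and the largest gap is unchanged; if $b_j > \max \mathcal{S}_{j-1}$ the two copies are disjoint and exactly one new gap, of size $b_j - \max \mathcal{S}_{j-1}$, appears. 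Unwinding the recursion gives $g = \max_j \max(0,\, b_j - \sum_{i < j} b_i)$, and re-indexing by $k = n + 1 - j$ yields the claim. The algorithm is then immediate: sort the $|a_i|$, compute suffix sums, and maximize $a_k - \sum_{\ell > k} a_\ell$; the running time is dominated by the sort, giving $O(n \log n)$.

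The main obstacle I anticipate is the overlap case of the recursion: I must show that when the newly inserted (and larger) shift $\mathcal{S}_{j-1} + b_j$ overlaps the current set, no pre-existing gap is enlarged and no new gap is created. Concretely I would verify that whenever $b_j \le \max \mathcal{S}_{j-1}$, every subinterval of $[0, \max \mathcal{S}_j]$ whose length equals the current largest gap still contains a point of $\mathcal{S}_j$, splitting into the cases where the subinterval lies inside $[0, \max\mathcal{S}_{j-1}]$, lies inside $[b_j, \max\mathcal{S}_j]$, or straddles $\max \mathcal{S}_{j-1}$; the straddling case uses that $\max\mathcal{S}_{j-1} \in \mathcal{S}_j$. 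The disjoint case, by contrast, is a direct computation of the single new gap.
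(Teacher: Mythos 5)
Your reduction to nonnegative entries (via the coordinate flip $w_i \mapsto 1-w_i$, $x_i \mapsto 1-x_i$), the identification of $\lindisc(\mm{A})$ with half the largest gap between consecutive subset sums, and the closed form $g = \max_k \max(0,\, a_k - \sum_{\ell > k} a_\ell)$ are all correct; in fact your formula is exactly the unrolled form of the paper's recursion $\ell_k = \max(a_k, \ell_{k-1} - a_k)$ from Lemma~\ref{lem:onerowsortinghelps} (the paper inserts entries in \emph{decreasing} order of magnitude, so the gap is updated at every step, whereas you insert in \emph{increasing} order so the gap changes only in the disjoint case). However, there is a genuine hole at exactly the point you flag as the main obstacle. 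In the overlap case $b_j \le \max \mathcal{S}_{j-1}$ you need the largest gap to be \emph{unchanged}, i.e.\ both $g_j \le g_{j-1}$ and $g_j \ge g_{j-1}$; the verification you outline (every subinterval of length $g_{j-1}$ in $[0,\max\mathcal{S}_j]$ contains a point of $\mathcal{S}_j$) establishes only the first inequality. The second is not automatic, because the shifted copy genuinely can split a largest gap: for $\{4,5,6\}$ the subset sums are $\{0,4,5,6,9,10,11,15\}$ with largest gaps $(0,4)$ and $(11,15)$, and inserting $b_4 = 7$ (an overlap step, since $7 \le 15$) places the shifted points $12$ and $13$ strictly inside $(11,15)$, destroying that gap. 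So your induction must argue that \emph{some} gap of size $g_{j-1}$ survives each insertion; as written it only yields $g \le \max_k(\cdots)$, i.e.\ an upper bound on $\lindisc(\mm{A})$, leaving the matching lower bound --- and hence the algorithm's correctness --- unproven.

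Fortunately the missing half admits a short direct proof that bypasses the induction entirely. With entries sorted increasingly, let $i^*$ attain $\max_i \bigl(b_i - \sum_{i' < i} b_{i'}\bigr)$ and suppose this value is positive. Then the open interval $\bigl(\sum_{i' < i^*} b_{i'},\ b_{i^*}\bigr)$ contains no subset sum of the \emph{full} set: any subset containing an index $i \ge i^*$ sums to at least $b_{i^*}$, while any subset using only indices below $i^*$ sums to at most $\sum_{i' < i^*} b_{i'}$. Hence the maximizing candidate is realized as an actual gap of $\mathcal{S}(\mm{A})$, giving $g \ge \max_k(\cdots)$ and completing your argument. (In effect this exhibits the \emph{bottommost} largest gap, whose right endpoint $b_{i^*}$ is at most every later inserted entry, so it is never split; the topmost largest gap, as the example shows, need not survive.) With this repair your route is a valid and arguably cleaner alternative to the paper's: the paper's Lemma~\ref{lem:onerowsortinghelps} proves both inequalities at every insertion step by exhibiting an explicit consecutive pair achieving the gap --- extra work that pays off later, since that constructive tracking is what powers the rounding algorithm of Theorem~\ref{thm:approx-one-row} --- whereas your closed form computes $\lindisc(\mm{A})$ in one pass after sorting but does not by itself produce the rounding.
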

Note that this stands in contrast to the observation above that
computing $\lindisc(\mm{A}, \vv{w})$ is hard even for a single-row
matrix $\mm{A}$. 
In addition to the theorem above, we also give a corresponding
rounding result, showing that any $\vv{w}\in \QQ^n$ can be efficiently
rounded to within error bounded by the linear discrepancy in the case
of single row matrices. 
\begin{theorem}
  \label{thm:approx-one-row}
  For any matrix $\mm{A} \in \QQ^{1 \times n}$ and any $\vv{w} \in ([0,1]
  \cap \QQ)^n$, we can find an $\vv{x} \in \{0,1\}^{n}$ such that $\norm{\mm{A}(\vv{w} - \vv{x})}_{\infty} \leq \lindisc(\mm{A})$ in time $O(n\log n)$.
\end{theorem}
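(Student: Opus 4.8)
The plan is to reinterpret the problem in terms of subset sums and then give a one-pass greedy rounding that exploits the ``gap structure'' of the set of subset sums. Writing the single row as $\mm{A} = (a_1, \dots, a_n)$, note that $\mm{A}\vv{x} = \sum_i a_i x_i$ ranges over the set $\Sigma = \{\sum_{i \in T} a_i : T \subseteq \{1,\dots,n\}\}$ of subset sums as $\vv{x}$ ranges over $\{0,1\}^n$, while the target $t = \mm{A}\vv{w} = \sum_i a_i w_i$ is a fixed rational number. Thus finding $\vv{x}$ with $\norm{\mm{A}(\vv{w}-\vv{x})}_\infty \le \lindisc(\mm{A})$ is the same as exhibiting a subset sum within distance $\lindisc(\mm{A})$ of $t$. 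First I would normalize the instance: for every index with $a_i<0$ I substitute $a_i \mapsto |a_i|$ and $w_i \mapsto 1-w_i$, and later recover $x_i$ as $1-x_i'$; a direct check shows $a_i(w_i-x_i)$ is unchanged, so neither the target distances nor $\lindisc(\mm{A})$ are affected. Dropping zero entries and sorting, I may therefore assume $a_1 \ge a_2 \ge \dots \ge a_n > 0$, so that $t \in [0,S]$ with $S = \sum_i a_i \in \Sigma$.

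The structural fact driving everything is a lower bound on $\lindisc(\mm{A})$ coming from ``gaps'' in $\Sigma$. Let $S_k = \sum_{j \ge k} a_j$, so $S_{n+1}=0$. I claim that whenever $a_k > S_{k+1}$, the open interval $(S_{k+1}, a_k)$ contains no subset sum: any $T$ either is contained in $\{k,\dots,n\}$, in which case $\sum_{i\in T}a_i \le S_{k+1}$ when $k\notin T$ and $\ge a_k$ when $k\in T$; or it contains some index $j<k$, whence $\sum_{i\in T}a_i \ge a_j \ge a_k$ by monotonicity. Since the midpoint $m=\frac{1}{2}(S_{k+1}+a_k)$ is attained as $\mm{A}\vv{w}$ for a suitable $\vv{w}\in[0,1]^n$ and lies at distance $\ge \frac{1}{2}(a_k - S_{k+1})$ from $\Sigma$, this yields $\lindisc(\mm{A}) \ge \frac{1}{2}(a_k - S_{k+1})$ for every such $k$.

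The rounding algorithm then processes the coordinates in decreasing order while maintaining a residual target $t_k$, initialized to $t_1=t$, with the invariant $t_k \in [0,S_k]$. At step $k$: if $t_k \ge a_k$, set $x_k=1$ and pass $t_{k+1}=t_k-a_k$; otherwise if $t_k \le S_{k+1}$, set $x_k=0$ and pass $t_{k+1}=t_k$; otherwise $S_{k+1}<t_k<a_k$ and I stop, choosing between the two explicit subset sums bracketing the residual --- either take all of $a_{k+1},\dots,a_n$ (residual error $t_k - S_{k+1}$) or take $a_k$ alone among the remaining indices (residual error $a_k - t_k$) --- whichever is closer. Each recursing step commits an exact amount and leaves the invariant intact (one checks $0 \le t_{k+1} \le S_{k+1}$ in both branches), so the only error is incurred at the terminal ``resolve'' step, where it is $\min(t_k-S_{k+1},\, a_k-t_k) \le \frac{1}{2}(a_k-S_{k+1}) \le \lindisc(\mm{A})$ by the lower bound above; if no resolve step ever triggers, the process ends at step $n$ having matched $t$ exactly. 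Sorting costs $O(n\log n)$ and dominates the $O(n)$ work for the suffix sums $S_k$, the evaluation of $t$, and the single pass.

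The main obstacle to watch for is the temptation to simply round to the nearest subset sum: that is the $\class{NP}$-hard closest-subset-sum problem, so the whole point is that we do not need the nearest sum, only one within $\lindisc(\mm{A})$. The delicate step is therefore verifying that the greedy decisions never ``waste'' more than the guaranteed slack --- concretely, that whenever the residual falls strictly between $S_{k+1}$ and $a_k$, the two endpoints of that gap are both realizable subset sums and the residual is within $\frac{1}{2}(a_k - S_{k+1})$ of one of them, which is exactly matched to the lower bound on $\lindisc(\mm{A})$. Care is also needed to confirm the invariant $t_k \in [0, S_k]$ is preserved so that the three cases remain exhaustive at every step, including the base case $k=n$ where $S_{n+1}=0$.
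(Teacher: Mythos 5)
Your proof is correct, but it takes a genuinely different route from the paper's. The paper's Algorithm~\ref{alg:lindiscvariant} maintains an interval $[u,v]$ whose endpoints are both subset sums sandwiching $w = \mm{A}\vv{w}$, processes \emph{all} $n$ entries (shrinking $v$ via $v \leftarrow \min(v, u+a_k)$ whenever $a_k$ cannot be added), and proves correctness by an induction on the maximal constant blocks of the output vector $\vv{u}$, showing the final width is dominated by the gap between some two consecutive elements of $\family{S}(\mm{A})$; it then invokes the characterization $2\lindisc(\mm{A}) = \text{largest gap in } \family{S}(\mm{A})$ inherited from Lemma~\ref{lem:onerowsortinghelps} and Theorem~\ref{thm:exact-one-row}. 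You instead terminate early at the first index $k$ with $S_{k+1} < t_k < a_k$, and replace both the block induction and the appeal to the largest-gap characterization with a single self-contained observation: when $a_k > S_{k+1}$ the open interval $(S_{k+1}, a_k)$ contains no subset sum (immediate from positivity and sortedness), its midpoint is realizable as $\mm{A}\vv{w}'$ for some $\vv{w}' \in [0,1]^n$, and hence $\lindisc(\mm{A}) \ge \frac{1}{2}(a_k - S_{k+1})$, exactly matching the terminal error $\min(t_k - S_{k+1},\, a_k - t_k)$. Your sign normalization ($a_i \mapsto |a_i|$, $w_i \mapsto 1 - w_i$, $x_i = 1 - x_i'$) is a constructive version of the paper's Corollary~\ref{cor:onerowtakemagnitudes} and is cleaner than the paper's handling of negative entries via indicator-vector initialization. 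What your approach buys is independence from the exact-computation machinery (the paper's proof leans on the incremental gap recursion) and a sharper accounting of where error arises — only at the terminal step, certified by an explicit witness near a deep hole; what the paper's approach buys is a slightly tighter final interval (it continues refining $v$ after the point where you stop, though this is irrelevant to the stated guarantee) and an analysis that unifies the rounding algorithm with the structure used for the exact algorithm. Two small points worth making explicit if you write this up: the resolve-step endpoints must be realizable \emph{consistently with the committed prefix}, which you do note (take all of $a_{k+1},\dots,a_n$, or $a_k$ alone among the remaining); and the invariant $t_k \in [0, S_k]$ forces $t_{n+1} = 0$ when no resolve triggers, so the exact-match claim is sound.
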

This result stands in contrast with the hardness of the subset sum
problem, which easily implies that it is $\class{NP}$-hard to round
$\vv{w}$ to within error $\lindisc(\mm{A}, \vv{w})$ even when $\mm{A}$
is a single row matrix.

We can extend Theorem~\ref{thm:exact-one-row} to the case of matrices
with a bounded number of rows, with the additional assumption that the
entries of $\mm{A}$ are bounded. Removing this additional assumption
is a fascinating open question. 
\begin{theorem}
  \label{thm:exact-const-row}
  For any matrix $\mm{A} \in \ZZ^{d \times n}$ where $d$ is some fixed constant and $\max_{i,j}|A_{i,j}| \leq \delta$, $\lindisc(\mm{A})$ can be computed in time $O\left(d(n\delta)^{d^2+d}\right)$.  
\end{theorem}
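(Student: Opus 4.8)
The plan is to reduce the computation of $\lindisc(\mm{A})$ to a purely geometric problem in $\RR^d$, and then to exploit the fact that, when $\mm{A}$ has bounded integer entries, the set of achievable integer images is small. Writing $V = \{\mm{A}\vv{x} : \vv{x}\in\{0,1\}^n\}\subseteq\ZZ^d$ for the set of rounded images and $W = \{\mm{A}\vv{w} : \vv{w}\in[0,1]^n\}$ for the zonotope obtained as the image of the cube, we have $\lindisc(\mm{A}) = \max_{\vv{u}\in W}\min_{\vv{v}\in V}\norm{\vv{u}-\vv{v}}_\infty$; that is, we seek the point of the zonotope $W$ that is farthest, in $\ell_\infty$, from the finite point set $V$. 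Equivalently, we look for the largest half-side $t$ of an axis-parallel cube whose centre lies in $W$ and whose interior contains no point of $V$, and this largest empty-cube radius equals $\lindisc(\mm{A})$.

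First I would bound and compute $V$. Since every coordinate of a point of $V$ is an integer in $[-n\delta, n\delta]$, we have $|V|\le (2n\delta+1)^d = O((n\delta)^d)$, and $V$ can be built by the obvious dynamic program that processes the columns of $\mm{A}$ one at a time, maintaining the reachable partial sums; this costs $O(dn(n\delta)^d)$, which is dominated by the main step. I would also compute a facet description of $W$: each facet normal is orthogonal to some $d-1$ columns of $\mm{A}$, so there are $O(n^{d-1})$ facets, each found by solving a small linear system and evaluating the support function $\max_{\vv{w}\in[0,1]^n}\langle\vv{c},\mm{A}\vv{w}\rangle = \sum_j\max(0,(\mm{A}^\top\vv{c})_j)$.

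The heart of the algorithm is a characterization of the optimal centre $\vv{u}^\star$. Because the pair $(\vv{u},t)$ has $d+1$ degrees of freedom, at a maximizer $d+1$ linearly independent constraints are tight, each of one of two kinds: a cube-facet constraint $u_i - v_i = \pm t$ coming from a point $\vv{v}\in V$ on the boundary of the optimal cube, or a facet constraint $\langle\vv{c},\vv{u}\rangle = \beta$ coming from $\partial W$ (the latter are needed because the deep hole can lie on the boundary of $W$, as simple low-dimensional examples show). I would therefore enumerate all ways to fix $d$ of these constraints, obtaining a one-parameter family $\vv{u}(t)$ that is affine in $t$; sweeping $t$ upward and scanning all points of $V$ and all facets of $W$ for the first violated constraint yields the largest $t$ for which $\vv{u}(t)$ stays in $W$ and keeps the cube empty, and the $(d+1)$-st tight constraint is exactly this binding event. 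Returning the maximum $t$ over all families gives $\lindisc(\mm{A})$. There are $O((d|V|+n^{d-1})^d) = O((n\delta)^{d^2})$ families (up to factors depending only on $d$), and each is resolved by an $O(d(|V|+n^{d-1})) = O(d(n\delta)^d)$ scan, for a total of $O(d(n\delta)^{d^2+d})$.

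The step I expect to be the main obstacle is proving the characterization rigorously: that every maximizer of $\min_{\vv{v}\in V}\norm{\cdot-\vv{v}}_\infty$ over $W$ is pinned by $d+1$ tight constraints of the stated forms, so that the enumeration is guaranteed to hit the optimum. The difficulty is that the objective $\vv{u}\mapsto\min_{\vv{v}\in V}\norm{\vv{u}-\vv{v}}_\infty$ is neither convex nor smooth — it is a lower envelope of polyhedral functions — so the usual vertex-optimality arguments do not apply directly. I would argue by local perturbation: if fewer than $d+1$ independent constraints are tight at a candidate optimum, there is a feasible direction along which the distance to the (finitely many) nearest neighbours strictly increases while $\vv{u}$ remains in $W$, contradicting optimality. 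Handling degenerate configurations — dependent tight constraints, or a maximum attained along a positive-dimensional face, in which case one passes to a pinned vertex of that face — is where the argument will require the most care.
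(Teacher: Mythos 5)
Your reduction is the same as the paper's: you compute the set $V$ of achievable integer images (the paper's $S$) by the same dynamic program over $[-n\delta,n\delta]^d\cap\ZZ^d$, and you recast $\lindisc(\mm{A})$ as a largest empty $\ell_\infty$-ball problem with the center constrained to the convex hull of the achievable points --- note that your zonotope $W$ equals $\ch(V)$, since a zonotope is the convex hull of the images of the cube's vertices, so your boundary constraints and the paper's hull constraints coincide. Where you genuinely diverge is the geometric subroutine. The paper (Lemma~\ref{lem:lec-in-higher-dimensions}) generalizes Toussaint's largest-empty-circle algorithm: it builds the $\ell_\infty$ Voronoi diagram of the point set, computes $\ch(V)$, and takes as candidate centers the Voronoi vertices inside the hull together with intersections of Voronoi $k$-faces with codimension-$k$ hull faces; this forces it to control the combinatorics of $\ell_\infty$ bisectors (Claim~\ref{claim:bound-facet-linf-bisector}) and to assume, after perturbation, that no two points share any coordinate --- an assumption your grid-like $V$ violates massively, so the perturbation step carries real weight there. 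Your enumeration of tight-constraint families avoids Voronoi diagrams and general-position issues entirely, and your facet description of $W$ from $(d-1)$-subsets of columns ($O(n^{d-1})$ facets, support function $\sum_j\max(0,(\mm{A}^\top\vv{c})_j)$) is cheaper than taking a hull of $(n\delta)^d$ points; both routes land at comparable running times, hiding factors that depend only on $d$.

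The obstacle you flag --- pinning a maximizer by $d+1$ tight constraints despite non-convexity --- dissolves if you argue by disjunctive decomposition rather than local perturbation: in $(\vv{u},t)$-space the feasible set is the union, over choices of one constraint $\sigma(u_i - v_i)\ge t$ per $\vv{v}\in V$, of polytopes whose defining inequalities are exactly your two kinds plus the facet inequalities of $W$; maximizing the linear objective $t$ over each bounded piece is an LP whose optimum is attained at a vertex, i.e., at $d+1$ tight linearly independent constraints. You do not need \emph{every} maximizer pinned, only \emph{some} optimal vertex, so degenerate positive-dimensional optimal faces cost you nothing. One genuine bug to fix: along the line $\vv{u}(t)$, sweeping upward ``to the first violated constraint'' is wrong, because the feasible $t$-set along the line need not be an interval --- if some $u_i(t)$ has slope of magnitude greater than $1$, a point of $V$ can enter the growing cube and later exit it (e.g., $|2t-3|<t$ holds exactly for $t\in(1,3)$, so emptiness with respect to that point holds for $t\le 1$ and again for $t\ge 3$). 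Instead compute, for each $\vv{v}\in V$, its ``inside'' interval of $t$-values in $O(d)$ time, intersect the $O(n^{d-1})$ $W$-membership constraints into a single interval, and return the largest breakpoint that is feasible via a sort-and-sweep; this repair preserves your claimed $O\bigl(d(n\delta)^{d^2+d}\bigr)$ bound up to logarithmic and $d$-dependent factors.
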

        
We further present an approximation algorithm for linear discrepancy.
\begin{theorem}
  \label{thm:approx}
  For any matrix $\mm{A} \in \QQ^{m \times n}$, $\lindisc(\mm{A})$ can be
  approximated in polynomial time within a factor of $2^{n+1}$. 
\end{theorem}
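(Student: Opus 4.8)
The plan is to reduce the whole computation to single rows, where Theorem~\ref{thm:exact-one-row} already supplies an exact polynomial-time algorithm. For each row $\mm{A}_i$ of $\mm{A}$, viewed as a matrix in $\QQ^{1\times n}$, I would compute $\lindisc(\mm{A}_i)$ in time $O(n\log n)$, and output the value
\[
  \ell := \max_{1\le i\le m} \lindisc(\mm{A}_i),
\]
for a total running time of $O(mn\log n)$. The claim is that $\ell$ is a $2^{n+1}$-approximation, i.e.\ that $\ell \le \lindisc(\mm{A}) \le 2^{n+1}\,\ell$. Both inequalities are obtained by sandwiching $\lindisc(\mm{A})$ between quantities governed by the row $\ell_1$-norms $\norm{\mm{A}_i}_1 = \sum_j |A_{i,j}|$.

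For the lower bound I would use that a single row can only be easier to round than the whole matrix. Fix any $\vv{w}\in[0,1]^n$ and let $\vv{x}^\star$ attain $\lindisc(\mm{A},\vv{w})$. Since $\norm{\mm{A}(\vv{w}-\vv{x})}_\infty \ge |\mm{A}_i(\vv{w}-\vv{x})|$ for every $\vv{x}$, we get $\lindisc(\mm{A},\vv{w}) = \norm{\mm{A}(\vv{w}-\vv{x}^\star)}_\infty \ge |\mm{A}_i(\vv{w}-\vv{x}^\star)| \ge \lindisc(\mm{A}_i,\vv{w})$. Evaluating this at a deep-hole of $\mm{A}_i$ yields $\lindisc(\mm{A}) \ge \lindisc(\mm{A}_i)$ for every $i$, and hence $\lindisc(\mm{A}) \ge \ell$.

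For the upper bound I would combine a trivial rounding with a matching lower bound on a single row. Rounding any $\vv{w}\in[0,1]^n$ coordinate-wise to its nearest integer vector $\vv{x}$ keeps $|w_j-x_j|\le \tfrac12$, so $|\mm{A}_i(\vv{w}-\vv{x})| \le \tfrac12\norm{\mm{A}_i}_1$ for every $i$; therefore $\lindisc(\mm{A}) \le \tfrac12\max_i\norm{\mm{A}_i}_1$. The key lemma is that a single row has linear discrepancy comparable to its $\ell_1$-norm, namely $\lindisc(\mm{A}_i) \ge \norm{\mm{A}_i}_1/2^{n+1}$. Granting this, if $i^\star$ maximizes $\norm{\mm{A}_i}_1$ then $\max_i\norm{\mm{A}_i}_1 = \norm{\mm{A}_{i^\star}}_1 \le 2^{n+1}\lindisc(\mm{A}_{i^\star}) \le 2^{n+1}\ell$, so that $\lindisc(\mm{A}) \le \tfrac12\max_i\norm{\mm{A}_i}_1 \le 2^{n}\,\ell \le 2^{n+1}\,\ell$, which is even slightly stronger than claimed.

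The single-row lemma is the crux of the argument, and it follows from a pigeonhole count on subset sums. Writing $\mm{A}_i=(a_1,\dots,a_n)$, the subset sums $S=\{\mm{A}_i\vv{x}:\vv{x}\in\{0,1\}^n\}$ consist of at most $2^n$ points lying in the interval $I=[\sum_{a_j<0}a_j,\ \sum_{a_j>0}a_j]$, whose length is exactly $\norm{\mm{A}_i}_1$; moreover $I$ is precisely the image $\mm{A}_i[0,1]^n$, and its two endpoints are themselves subset sums. Since $S$ has at most $2^n$ points and contains both endpoints of $I$, some gap between consecutive points has length at least $\norm{\mm{A}_i}_1/(2^n-1)$, and its midpoint $p$ lies in $I$. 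Choosing $\vv{w}$ with $\mm{A}_i\vv{w}=p$ then gives $\min_{\vv{x}\in\{0,1\}^n}|\mm{A}_i(\vv{w}-\vv{x})| = \mathrm{dist}(p,S) \ge \norm{\mm{A}_i}_1/\bigl(2(2^n-1)\bigr) > \norm{\mm{A}_i}_1/2^{n+1}$. I expect the main obstacle to be this lemma, and in particular the one subtlety worth checking carefully: that the chosen point $p$ is actually attainable as $\mm{A}_i\vv{w}$ for some $\vv{w}\in[0,1]^n$, which holds because the image of the cube under a linear map is connected and therefore equals the full interval $I$.
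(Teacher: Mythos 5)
Your proof is correct, and you handle the two genuinely delicate points properly: the monotonicity step ($\lindisc(\mm{A}) \ge \lindisc(\mm{A}_i)$, valid because for every fixed $\vv{w}$ the objective $\norm{\mm{A}(\vv{w}-\vv{x})}_\infty$ pointwise dominates $|\mm{A}_i(\vv{w}-\vv{x})|$ before the $\min_{\vv{x}}$ and $\max_{\vv{w}}$ are taken), and the attainability of the gap midpoint $p$ as $\mm{A}_i\vv{w}$, which you rightly justify by the connectedness of $\mm{A}_i[0,1]^n$. But your route is genuinely different from the paper's. The paper never invokes the one-row machinery: it outputs $\frac{1}{2}\norm{\mm{A}}_{\infty \rightarrow \infty}$ (half the largest row $\ell_1$-norm) directly, proves $\lindisc(\mm{A}) \le \frac{1}{2}\norm{\mm{A}}_{\infty\rightarrow\infty}$ by the same nearest-integer rounding you use, and proves the reverse inequality $\norm{\mm{A}}_{\infty\rightarrow\infty} \le 2^{n+1}\lindisc(\mm{A})$ by an $n$-dimensional chaining argument (Lemma~\ref{lem:approx-algorithm}): since $\lindisc(\mm{A}) = t$ means $[0,1]^n$ is covered by the $2^n$ translates $\vv{x} + t\polytope{P}$, $\vv{x} \in \{0,1\}^n$, of $\polytope{P} = \{\vv{z} : \norm{\mm{A}\vv{z}}_\infty \le 1\}$, any two points of the cube are joined through a chain of at most $2^n$ pairwise-overlapping translates, bounding the $\polytope{P}$-diameter of the cube by $2t \cdot 2^n$. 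Your key lemma---$\lindisc(\mm{A}_i) \ge \norm{\mm{A}_i}_1 / 2^{n+1}$ by pigeonhole on at most $2^n$ subset sums spanning an interval of length $\norm{\mm{A}_i}_1$---is precisely the one-dimensional specialization of that covering lemma, and your monotonicity observation is what lets you get away with only the 1-D case. As for what each approach buys: yours is more elementary (no chaining, no Minkowski functionals), and your output $\ell = \max_i \lindisc(\mm{A}_i)$ is a certified \emph{lower} bound on $\lindisc(\mm{A})$ within a factor $2^n$, whereas the paper's output is a certified \emph{upper} bound within the same factor; on the other hand, the paper's Lemma~\ref{lem:approx-algorithm} holds for arbitrary symmetric polytopes, arbitrary convex $S$, and arbitrary covering centers, a generality connected to the paper's closing conjecture that the factor might be improvable to a function of $\min\{m,n\}$ or the rank of $\mm{A}$, which a row-by-row argument cannot address. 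One small simplification: once you have the single-row pigeonhole lemma and monotonicity, you do not actually need Theorem~\ref{thm:exact-one-row}---outputting $\frac{1}{2}\max_i \norm{\mm{A}_i}_1$ already achieves the same guarantee.
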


\section{Hardness Result}	
	In this section, we show that linear discrepancy ($\class{LDS}$) is
	$\class{NP}$-Hard by reducing from monotone not-all-equal
	3-$\class{SAT}$ ($\class{MNAE3SAT}$)~\cite{gold1978complexity} to
	each. The decision problem version of linear discrepancy we consider
	is defined below.

	\problem{Monotone Not-All-Equal 3-$\class{SAT}$}{MNAE3SAT}{Let $U$ be a collection of variables $\{u_1, ..., u_n\}$ and $\mathcal{C}$ be a 3-$\mathsf{CNF}$ with clauses $\{C_1, ..., C_m\}$ such that $C_i = t_{i,1} \lor t_{i,2} \lor t_{i,3}$ for positive literals $t_{i,j}$.}{Does there exist a truth assignment $\tau: U \rightarrow \{\mathsf{T}, \mathsf{F}\}$ such that $\mathcal{C}$ is satisfied and each clause has at least one true and one false literal?}{}
	
	\problem{Linear Discrepancy}{LDS}{Let $\mm{A} \in \QQ^{m \times n}$ be a matrix and $t \geq 0$ a rational value.}{Is $\lindisc(\mm{A}) \leq t$?}{}

	\subsection{Linear Discrepancy}
	Before we show that linear discrepancy is hard, we will show that the value of $\lindisc(\mm{A})$ can be expressed using a polynomial number of bits in the bit complexity of a matrix for rational matrices. Due to space constraints, the proof can be found in Section~\ref{app:lemma-certificateinpi2}.
	\begin{lemma}
		\label{lem:certificateinpi2}
		For any matrix $\mm{A} \in \QQ^{m \times n}$, there exists a
		deep hole for $\mm{A}$ with bit complexity polynomial in $n$
		and the bit complexity of $\mm{A}$, and, therefore, $\lindisc(\mm{A})$
		can be written in number of bits polynimial in $n$ and the bit
		complexity of $\mm{A}$.
	\end{lemma}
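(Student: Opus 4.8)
The plan is to exhibit a single deep hole that arises as a vertex of an explicit hyperplane arrangement built from $\mm{A}$, and then to bound its bit complexity by Cramer's rule. The starting point is to rewrite the discrepancy as a min--max of affine functions. For each rounding $\vv{x} \in \{0,1\}^n$, row index $i \in [m]$, and sign $\sigma \in \{+1,-1\}$, define the affine function $\ell_{\vv{x},i,\sigma}(\vv{w}) = \sigma\langle \vv{a}_i, \vv{w} - \vv{x}\rangle$, where $\vv{a}_i$ is the $i$-th row of $\mm{A}$. Then $\norm{\mm{A}(\vv{w} - \vv{x})}_\infty = \max_{i,\sigma}\ell_{\vv{x},i,\sigma}(\vv{w})$, so $\lindisc(\mm{A},\vv{w}) = \min_{\vv{x}}\max_{i,\sigma}\ell_{\vv{x},i,\sigma}(\vv{w})$, and $\lindisc(\mm{A})$ is the maximum of this min--max of affine functions over $[0,1]^n$.

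Next I would form the hyperplane arrangement $\mathcal{H}$ consisting of every hyperplane $\{\vv{w} : \ell_{\vv{x},i,\sigma}(\vv{w}) = \ell_{\vv{x}',i',\sigma'}(\vv{w})\}$ over all pairs of such affine functions. On the closure of each full-dimensional cell of $\mathcal{H}$ the relative order of all the $\ell_{\vv{x},i,\sigma}$ is fixed, so the inner $\max_{i,\sigma}$ is realized by one fixed affine function for each $\vv{x}$, and the outer $\min_{\vv{x}}$ by one fixed rounding; hence $g(\vv{w}) := \lindisc(\mm{A},\vv{w})$ coincides with a single affine function on that closed cell. This is the key structural observation: although $g$ is not concave, it is piecewise affine with pieces indexed by the cells of $\mathcal{H}$. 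With this in hand, let $\vv{w}^*$ be any deep hole and let $c$ be a closed full-dimensional cell of $\mathcal{H}$ containing it. Since $c \cap [0,1]^n$ is a bounded polytope and $g$ is affine on it, $g$ attains its maximum over $c \cap [0,1]^n$ at a vertex $\vv{v}$; because $g(\vv{v}) \ge g(\vv{w}^*) = \lindisc(\mm{A})$ and $\lindisc(\mm{A})$ is the global maximum, $\vv{v}$ is itself a deep hole. Now $\vv{v}$ is the unique solution of a nonsingular system of $n$ linearly independent equations, each of which is either a box facet $w_j = 0$ or $w_j = 1$, or an arrangement hyperplane $\langle \sigma\vv{a}_i - \sigma'\vv{a}_{i'}, \vv{w}\rangle = \sigma\langle \vv{a}_i, \vv{x}\rangle - \sigma'\langle\vv{a}_{i'}, \vv{x}'\rangle$.

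It then remains to bound the bit complexity of this system and of its solution. Each equation has coefficients that are sums and differences of entries of $\mm{A}$, together with an integer combination of at most $2n$ such entries on the right-hand side, so after clearing denominators every coefficient is an integer of bit complexity polynomial in $n$ and the bit complexity of $\mm{A}$. Applying Cramer's rule together with Hadamard's determinant bound to the $n \times n$ system then yields a solution $\vv{v}$, and hence a value $\lindisc(\mm{A}) = g(\vv{v})$, of bit complexity polynomial in $n$ and the bit complexity of $\mm{A}$, which is exactly what the lemma asserts.

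I expect the main conceptual obstacle to be the second step. The naive decomposition of $[0,1]^n$ into regions where a given rounding $\vv{x}$ is optimal does \emph{not} yield polytopes: such a region is where one convex function lies below another and is in general non-convex, so the usual ``a convex function is maximized at a vertex'' argument does not apply to it directly. It is therefore essential to refine all the way down to the arrangement of pairwise ties among the affine pieces $\ell_{\vv{x},i,\sigma}$, on which $g$ genuinely becomes piecewise affine. A secondary point worth checking is that the exponential number of hyperplanes in $\mathcal{H}$ is harmless: the argument only inspects the $n$ equations active at the single vertex $\vv{v}$, each of bounded complexity, so the size of the full arrangement never enters the bit-complexity estimate.
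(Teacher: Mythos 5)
Your proof is correct, but it reaches the conclusion by a genuinely different route than the paper. The paper starts from an arbitrary deep hole $\vv{w}^*$ and uses it to select, for each $\vv{x}\in\{0,1\}^n$, a single witness row and sign $\vv{b}_x=\sigma\vv{r}_i$ satisfying $\vv{b}_x(\vv{w}^*-\vv{x})\ge\lindisc(\mm{A})$; it then considers the linear program maximizing $\lambda$ subject to $\vv{b}_x(\vv{w}-\vv{x})\ge\lambda$ for all $\vv{x}\in\{0,1\}^n$ and $\mathbf{0}\le\vv{w}\le\mathbf{1}$, shows its optimum equals $\lindisc(\mm{A})$ exactly, and then outsources the bit bound to the standard fact (Schrijver) that an LP in $n+1$ variables whose individual constraints have polynomial bit complexity has an optimal basic solution of polynomial bit complexity. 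You instead establish the global piecewise-affine structure of $\vv{w}\mapsto\lindisc(\mm{A},\vv{w})$ by refining to the arrangement of pairwise ties among the signed row functionals $\ell_{\vv{x},i,\sigma}$, move the deep hole to a vertex of the intersection of a closed cell with $[0,1]^n$, and bound that vertex by hand via Cramer's rule and Hadamard's inequality. The two arguments pin down deep holes of the same kind: eliminating $\lambda$ from the tight constraints of the paper's LP at an optimal basic solution yields precisely your tie hyperplanes $\sigma\ip{\vv{a}_i}{\vv{w}-\vv{x}}=\sigma'\ip{\vv{a}_{i'}}{\vv{w}-\vv{x}'}$ together with box facets. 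What the paper's witness-selection trick buys is brevity — it never needs the arrangement or the (correct but delicate) claim that the order of the affine pieces is fixed on a closed cell, only feasibility of $(\vv{w}^*,\lambda_A)$ and the contradiction argument for $\lambda^*\le\lambda_A$ — whereas your argument is self-contained, avoids citing LP theory, and proves the stronger structural statement that every deep hole lies in a closed cell on which $\lindisc(\mm{A},\cdot)$ is affine, so some deep hole is always a vertex of the tie arrangement restricted to the cube. Your two cautionary remarks are both on point: the region where a fixed rounding $\vv{x}$ is optimal is indeed non-convex in general, so refining to pairwise ties is necessary (and on the closure of a cell only the weak order survives, which is all your continuity argument uses); and the exponential number of hyperplanes is harmless in both proofs, since the paper's LP likewise has $2^n$ constraints and in each case only the $n$ (or $n+1$) constraints active at the chosen vertex, each of polynomial bit complexity, enter the final bound.
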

	
	\begin{proof}[Proof of Theorem~\ref{thm:hardness}]
		Note first that the fact that $\class{LDS}$ is contained in $\Pi_2$
		is a straightforward consequence of
		Lemma~\ref{lem:certificateinpi2}: the ``for-all'' quantifier is over
		potential deep holes $\vv{w}\in [0,1]^n$ of the appropriate
		polynomially bounded bit complexity, and the ``exists'' quantifier is
		over $\vv{x} \in \{0,1\}^n$. 
		
		Next we prove hardness.  Let 3-$\mathsf{CNF}$ $\mathcal{C}$ be a
		$\class{MNAE3SAT}$ instance as described above. The corresponding
		$\class{LDS}$ instance will be the incidence matrix
		$\mm{A} \in \{0,1\}^{m \times n}$ of $\mathcal{C}$: column
		$\vv{a}_j$ of $\mm{A}$ corresponds to variable $u_j$ and row
		$\vv{r}_i$ of $\mm{A}$ corresponds to clause $C_i$, and $A_{i,j} =
		1$ if and only if variable $u_j$ appears in clause $C_i$. Let the
		target $t$ in the $\class{LDS}$ problem be $\frac{3}{2} - \epsilon$
		for $\epsilon > 0$ to be determined later.

		Consider first that case that $\mathcal{C}$ is a $\NO$-instance of
		$\class{MNAE3SAT}$ i.e. for every truth assignments $\tau$, there
		exists a clause $C_i$ whose literals all get the same truth
		assignment. Each $\vv{x} \in \{0, 1\}^n$ corresponds to a truth
		assignment. If $x_i = 1$ (resp. $x_i = 0$) then $u_i$ is true
		(resp. $u_i$ is false). Let $C_j$ be the clause whose literals have
		the same truth value. Then
		\[
		\lindisc(\mm{A}) \ge \lindisc(\mm{A}, (1/2)\cdot \ind{1}) \geq
		\left|\vv{r}_j \left(\frac{1}{2} \cdot \ind{1} -
		\vv{x}\right)\right| = \frac{3}{2} > \frac32 - \epsilon,\]
		so $\mm{A}$ is a $\NO$-instance of $\class{LDS}$. 
		
		Consider next the case that $\mathcal{C}$ is a $\YES$-instance of
		$\class{MNAE3SAT}$, and let $\tau$ be a satisfying
		assignment. Suppose $\vv{w}^* \in [0,1]^n$ is a deep-hole of
		$\mm{A}$. If $w_i^* = \frac{1}{2}$ for all $i \in [n]$ then
		\[
		\lindisc(\mm{A}) = \lindisc(\mm{A}, (1/2)\cdot \ind{1})
		=\disc(\mm{A}) \leq \left\|\mm{A}\left(\frac{1}{2} \cdot \ind{1} -
		\vv{x}^{*}\right)\right\|_{\infty} = \frac{1}{2}
		\] 
		since every clause has exactly two elements with the same truth
		value. Thus $\mm{A}$ is a $\YES$-instance of $\class{LDS}$ as long
		as we choose $\epsilon \le 1$. Suppose then that $\vv{w}^* \neq
		\frac12 \ind{1}$, and let $\epsilon$
		be a lower bound on the smallest non-zero gap between $w_i^*$ and $1/2$ i.e. for all
		$w_i^* \neq \frac{1}{2}$,
		\[\left|w_i^* - \frac{1}{2}\right| \geq
		\epsilon.\]
		By Lemma~\ref{lem:certificateinpi2}, which implies that $\vv{w}^*$
		has polynomial bit complexity, we know that we can choose such an
		$\epsilon$ of polynomial bit complexity. 
		We will show that $\lindisc(\mm{A}, \vv{w}^*) \leq \frac{3}{2} - \epsilon$ by constructing a colouring $\vv{x}^*$. Let 
		\[x_i^* = 
		\begin{cases}
		\round(w_i^*) &\mbox{if } w_i^* \neq \frac{1}{2}\\
		\tau(u_i) &\mbox{otherwise}
		\end{cases}\]
		where $\round(w_i^*)$ is $w_i^*$ rounded to the closest integer and $u_i$ is the variable corresponding to column $i$. Let $\vv{r}$ be a row of matrix $\mm{A}$ with non-zero entries in columns $i$, $j$, and $k$. We bound the discrepancy of row $\vv{r}$ based on the number of rounded variables $R_v$ among $\{x_i, x_j, x_k\}$.  
		\begin{enumerate}
			\item[$R_v = 0$:] Since none of the variables are rounded, $w_i^* = w_j^* = w_k^* = \frac{1}{2}$ and 
			\[\left|\vv{r}\left(\vv{x}^*-\vv{w}^*\right)\right| = \left|\left(x_i^* - \frac{1}{2}\right) + \left(x_j^* - \frac{1}{2}\right) + \left(x_k^* - \frac{1}{2}\right)\right| = \frac{1}{2}\]
			since $\tau$ is a satisfying assignment.
			\item[$R_v = 1$:] W.l.o.g assume that that $x_i^*$ is the rounded value and $w_j^* = w_k^* = \frac{1}{2}$. Then  
			\[\left|\vv{r}\left(\vv{x}^* - \vv{w}^*\right)\right| = \left|\left(x_i^* - w_i^*\right) + \left(x_j^* - \frac{1}{2}\right) + \left(x_k^* - \frac{1}{2}\right)\right| \leq \left(\frac{1}{2} - \epsilon\right) + 1 = \frac{3}{2} - \epsilon.\]
			\item[$R_v = 2$:] W.l.o.g assume that $x_i^*$ and $x_j^*$ are the rounded values and $w_k^* = \frac{1}{2}$. Then  
			\[\left|\vv{r}\left(\vv{x}^* - \vv{w}^*\right)\right| = \left|\left(x_i^* - w_i^*\right) + \left(x_j^* - w_j^*\right) + \left(x_k^* - \frac{1}{2}\right)\right| \leq 2 \cdot \left(\frac{1}{2} - \epsilon\right) + \frac{1}{2} = \frac{3}{2} - 2\epsilon.\]
			\item[$R_v = 3$:] All three values are rounded so  
			\[\left|\vv{r}\left(\vv{x}^* - \vv{w}^*\right)\right| = \left|\left(x_i^* - w_i^*\right) + \left(x_j^* - w_j^*\right) + \left(x_k^* - w_k^*\right)\right| \leq 3\cdot\left(\frac{1}{2} - \epsilon\right) = \frac{3}{2} - 3\epsilon.\]
		\end{enumerate}
		Since $\vv{r}$ was an arbitrary row of
	                $\mm{A}$, $\lindisc(\mm{A}) = \lindisc(\mm{A},
	                \vv{w}^*) \leq \frac{3}{2} - \epsilon$ as
	                required. This completes the reduction. 
	\end{proof}


\section{Algorithms for Linear Discrepancy}

In the following we consider restrictions and variants of linear
discrepancy for which we are able to give poly-time algorithms. The first subsection considers matrices with a single
row. The second subsection considers matrices
$\mm{A} \in \ZZ^{d \times n}$ with constant $d$ and entry of largest
magnitude $\delta$. In that case, we compute $\lindisc(\mm{A})$ in
time $O\left(d(2n\delta)^{d^2}\right)$. The third subsection presents
a poly-time $2^n$ approximation to $\lindisc(\mm{A})$ for
$\mm{A} \in \QQ^{m \times n}$.

\subsection{Linear Discrepancy of a Row Matrix} 	
\label{sssection:onerow}
We begin by developing some intuition for the linear discrepancy of a
one-row matrix, $\mm{A} = [a_1, ..., a_n]$. For now, let us make the
simplifying assumption that the entries of $\mm{A}$ are non-negative
and sorted in decreasing order. Define the \emph{subset sums} of
$\mm{A}$ to be the multi-set
$\family{S}(\mm{A}) = \{s_1, ..., s_{2^n}\}$ where each
$s_i = \mm{A}\vv{x}$ for exactly one $\vv{x} \in \{0,1\}^n$. Enumerate
the element of $\family{S}(\mm{A})$ in non-decreasing order, i.e.
$s_i \leq s_{i+1}$. If $\ell_{A} = 2\cdot\lindisc(\mm{A})$, then
$\ell_A$ is the width of the largest gap between consecutive entries
in $\family{S}(\mm{A})$.
	
Suppose $\mm{A}_i = [a_1, ..., a_i]$. Let us consider how
$\family{S}(\mm{A}_i)$ and $\lindisc(A_i)$ change for the first couple
of values of $i$. Clearly, $\family{S}(\mm{A}_1) = [0, a_1]$ and
$\lindisc(\mm{A}_1) = \frac{a_1}{2}$. $\family{S}(\mm{A}_2)$ is the
disjoint union of $\family{S}(\mm{A}_1)$ and
$\family{S}(\mm{A}_1)$ shifted to the right by $a_2$. Since
$a_1 \geq a_2$, $\family{S}(\mm{A}_2) = [0, a_2, a_1, a_1 + a_2]$
where the largest gap is of size $\max(a_2, a_1 - a_2)$. See Figure
\ref{fig:onerowalgexp}. In general, the entries of
$\family{S}(\mm{A}_i)$ consists of two copies of
$\family{S}(\mm{A}_{i-1})$ with one shifted to the right by $a_i$. The
gaps in $\family{S}(\mm{A}_i)$ are gaps previously in
$\family{S}(\mm{A}_{i-1})$ or between an element of
$\family{S}(\mm{A}_{i-1})$ and one in
$\{a_i + s: s \in \family{S}(\mm{A}_{i-1})\}$.
	
	\fig{onerowalgexp}{1}{Obtaining $\family{S}(\mm{A}_2)$ from $\family{S}(\mm{A}_1)$ when $a_1 \geq a_2$.}
	
	A similar structure occurs for general matrices with real valued entries with two caveats: (1) the previous interval is shifted left or right depending on the sign of the current entry (negative and positive respectively) and (2) the smallest entry of $\family{S}(\mm{A})$ is not zero but the sum of the negative entries in $\mm{A}$.
	
	\begin{lemma}
          \label{lem:onerowsortinghelps}
          Suppose $\mm{A}_{k-1} = [a_1, ..., a_{k-1}]$ with entries in $\RR$ and $|a_i| \geq |a_{i+1}|$. Let the largest gap in $\family{S}(\mm{A}_{k-1})$ be of size $\ell_{k-1}$. Then, for $\mm{A}_{k} = [a_1, ..., a_{k-1}, a_{k}]$ where $|a_{k}| \leq |a_{i}|$ for all $i \in [k-1]$, the largest gap in $\family{S}(\mm{A}_k)$ is of size $\max(|a_k|, \ell_{k-1} - |a_k|)$.
        \end{lemma}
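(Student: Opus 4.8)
The plan is to exploit the recursive structure noted just before the lemma: because $x_k \in \{0,1\}$, the subset sums split as $\family{S}(\mm{A}_k) = \family{S}(\mm{A}_{k-1}) \cup \big(\family{S}(\mm{A}_{k-1}) + a_k\big)$, two translated copies of the previous set. Write $S = \family{S}(\mm{A}_{k-1})$, $T = \family{S}(\mm{A}_k)$, and $\delta = |a_k|$. I would first reduce to $a_k = \delta \ge 0$: the multiset $S \cup (S - \delta)$ is the translate by $-\delta$ of $S \cup (S + \delta)$, and translation preserves every gap, so both signs of $a_k$ give the same gap structure. Since only strictly positive gaps affect the maximum, I may treat $S$ and $T$ as sets of distinct values. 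The statement then breaks into an upper bound, that every gap of $T$ is at most $\max(\delta, \ell_{k-1} - \delta)$, and a matching lower bound, that gaps of both sizes actually occur.

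For the \emph{upper bound}, take a maximal gap $(p,q)$ of $T$ and reason about $q - p$. If $q - p \le \delta$ there is nothing to show, so suppose $q - p > \delta$; then both $p + \delta$ and $q - \delta$ lie in the open interval $(p,q)$, hence are not in $T$. This pins down the endpoints: if $p \in S$ then $p + \delta \in S + \delta \subseteq T$, a contradiction, so $p \in S + \delta$, i.e. $p - \delta \in S$; symmetrically $q \in S$. It remains to check that $p - \delta$ and $q$ are consecutive in $S$, since then $q - (p - \delta) \le \ell_{k-1}$ yields exactly $q - p \le \ell_{k-1} - \delta$. No element of $S$ lies in $(p,q)$ because that interval misses $T$; and no element of $S$ lies in $(p - \delta, p]$, since the $\delta$-shift of such an element would land in $(p, p+\delta] \subseteq (p,q)$. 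Hence $p - \delta$ is the largest element of $S$ below $q$, as required. Note this half needs only $\delta > 0$.

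For the \emph{lower bound} I would exhibit a gap of size $\delta$ and one of size $\ell_{k-1} - \delta$. For the former I compute the leftmost gap of $S$: the minimum subset sum $s_{\min} = \sum_{i : a_i < 0} a_i$ is attained by the assignment equal to $1$ exactly on the negative coordinates, and from this assignment flipping any nonempty set $F$ of coordinates changes the sum by exactly $\sum_{i \in F} |a_i| > 0$ (no cancellation, as each flip moves a coordinate off its individual minimum). Thus the second smallest value of $S$ is $s_{\min} + \min_i |a_i| = s_{\min} + |a_{k-1}|$, and since $\delta = |a_k| \le |a_{k-1}|$ the image $s_{\min} + \delta$ falls strictly below the next original point, opening a clean gap of size exactly $\delta$ at the left end of $T$. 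For the latter, take a maximum gap $(p', q')$ of $S$, so $q' - p' = \ell_{k-1}$; when $\delta < \ell_{k-1}$ the shifted point $p' + \delta$ is the largest element of $T$ below $q'$ (the interval $(p', q')$ contains no original point and no admissible larger shifted point), giving a gap $q' - (p' + \delta) = \ell_{k-1} - \delta$. When $\delta \ge \ell_{k-1}$ this quantity is dominated by the size-$\delta$ gap, so in every case the maximum gap equals $\max(\delta, \ell_{k-1} - \delta)$.

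I expect the \textbf{main obstacle} to be the leftmost-gap computation in the lower bound: this is the one place the hypothesis $|a_k| \le |a_i|$ is genuinely used, and it is exactly what guarantees that the two copies interleave enough to force a gap of size $\delta$ (without it the claimed equality can fail, e.g. $S = \{0,1\}$ with $\delta = 3$ gives maximum gap $2 \ne \max(3, -2)$). The verification that flips add magnitudes without cancellation, together with the endpoint-typing and ``consecutive in $S$'' steps of the upper bound, are the other delicate points; ties among subset sums are harmless once one passes to distinct values, but I would state that reduction explicitly to keep the consecutiveness arguments clean.
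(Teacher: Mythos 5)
Your proposal is correct and takes essentially the same route as the paper's proof: both rest on the decomposition $\family{S}(\mm{A}_k) = \family{S}(\mm{A}_{k-1}) \cup \left(a_k + \family{S}(\mm{A}_{k-1})\right)$, prove the upper bound by showing every large gap is cut into pieces of size at most $\max(|a_k|, \ell_{k-1} - |a_k|)$, and prove the lower bound by exhibiting the leftmost gap $(s_{\min}, s_{\min} + |a_k|)$ --- which, as you correctly flag, is the one place the minimality hypothesis $|a_k| \le |a_i|$ is used --- together with the shifted copy of the maximum gap, of size $\ell_{k-1} - |a_k|$. Your two deviations are cosmetic but tidy: you collapse the paper's separate $a_k > 0$ and $a_k < 0$ cases into one via translation invariance of gaps, and your no-cancellation flip argument supplies the detail behind the paper's ``it is easy to check that $s_1$ equals $s_0 + |a_k|$.''
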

	\begin{proof}
          Again, it is important to remember that the entries of $\family{S}(\mm{A}_{k})$ are exactly those in $\family{S}(\mm{A}_{k-1})$ along with those in $\{a_k + s: s \in \family{S}(\mm{A}_{k-1})\}$. Let $\ell = \max(|a_k|, \ell_{k-1} - |a_k|)$.
		
          We first show that $2\cdot\lindisc(\mm{A}_{k}) \leq \ell$ by
          showing that gaps between consecutive entries in
          $\family{S}(\mm{A}_{k})$ have size at most $\ell$. If
          $(s_{j}, s_{j+1})$ is a consecutive pair in
          $\family{S}(\mm{A}_{k-1})$ such that $s_{j+1} - s_j > \ell$,
          then $s_j$ and $s_{j+1}$ are no longer consecutive in
          $\family{S}(\mm{A}_{k})$, since
          $s_{j} \leq s_{j} + a_k \leq s_{j+1}$ if $a_k > 0$ and
          $s_{j} \leq s_{j+1} + a_k \leq s_{j+1}$ if $a_k <
          0$. See Figure \ref{fig:polytimealgrowmatrixlemma1}.
          Then, the gap given by any such pair gets split into gaps of
          size at most $\max\{|a_k|, s_{j+1} - s_j - |a_k|\} \le \ell$,
          where the inequality holds because $s_{j+1} - s_j \le \ell_{k-1}$.
          It follows that the size of each gap in
          $\family{S}(\mm{A}_k)$ is at most $\ell$.

          \fig{polytimealgrowmatrixlemma1}{1}{All consecutive pairs in $\family{S}(\mm{A}_{k-1})$ of size greater than $|a_k|$ will be divided into two or more consecutive pairs in $\family{S}(\mm{A}_k)$. The red interval indicates what happens when $a_k > 0$. The blue interval indicates what happens when $a_k < 0$.} 
		
          Next we will show that
          $2\cdot\lindisc(\mm{A}_{k}) \geq \ell$ by producing a pair
          of consecutive entries in $\family{S}(\mm{A}_{k})$ which
          achieves gap $\ell$. Suppose $\ell = |a_{k}|$. Recall that
          $s_0$ is the smallest subset sum of all entries in
          $\mm{A}_{k}$, which equals the sum of all negative entries
          in $\mm{A_k}$. Then it is easy to check that $s_1$ equals
          $s_0 + |a_k|$, where we recall that $a_k$ is the entry in
          $\mm{A}_k$ with minimum absolute value. Therefore, 
          $(s_0, s_0 + |a_{k}|)$ is a consecutive pair in
          $\family{S}(\mm{A}_k)$. This means that if $\ell = |a_k|$,
          then we are done, as we have produced a pair with gap $\ell$.
		
          When $\ell = \ell_{k-1} - |a_k| > |a_k|$, we split our analysis into two cases: (1) $a_k > 0$ and (2) $a_k < 0$. 
		
		In the former case, let $(s_{j^*}, s_{j^*+1})$ be a consecutive pair in $\family{S}(\mm{A}_{k-1})$ that achieves gap $\ell_{k-1}$ and suppose, towards a contradiction, that $s_{j^*} + a_k$ and $s_{j^*+1}$ do not appear consecutively in $\family{S}(\mm{A}_k)$. Then there must be some $s \in \family{S}(\mm{A}_k)$ such that $s_{j^*} + a_k < s < s_{j^*+1}$. Note that $s$ cannot be an element of $\family{S}(\mm{A}_{k-1})$ since $s_{j^*}$ and $s_{j^*+1}$ are consecutive in $\family{S}(\mm{A}_{k-1})$, so $s - a_k$ must be an element of $\family{S}(\mm{A}_{k-1})$. However, since $s > s_{j^*} + a_k$, we have $s - a_k > s_{j^*}$. This is a contradiction since $s_{j^*}$ and $s_{j^*+1}$ are consecutive entries in $\family{S}(\mm{A}_{k-1})$. See Figure \ref{fig:polytimealgrowmatrix}. Thus $(s_{j^*} + a_k, s_{j^*+1})$ must be a consecutive pair in $\family{S}(\mm{A}_k)$.
		
		\fig{polytimealgrowmatrix}{0.6}{Suppose $a_k < \ell_{k-1} - a_k$ and there exists $s \in \family{S}(\mm{A}_{k})$ such that $s_{j^*} + a_k < s < s_{j^*+1}$.}
		
		The latter case, when $a_k < 0$, is similar. Again there exists a pair of consecutive entries $(s_{j^*}, s_{j^*+1})$ in $\family{S}(\mm{A}_{k-1})$ which achieves gap $\ell_{k-1}$. Suppose, towards contradiction, that $s_{j^*}$ and $s_{j^*+1} - |a_k|$ do not appear consecutively in $\family{S}(\mm{A}_k)$. Then there must be some $s \in \family{S}(\mm{A}_k)$ such that $s_{j^*} < s < s_{j^*+1} - |a_k|$. Again, $s$ cannot be an element of $\family{S}(\mm{A}_{k-1})$ since $s_{j^*}$ and $s_{j^*+1}$ are consecutive in $\family{S}(\mm{A}_{k-1})$, so $s + |a_k|$ must be an element of $\family{S}(\mm{A}_{k-1})$. However since $s < s_{j^*+1} - |a_k|$, we have $s + |a_k| < s_{j^* + 1}$. This is a contradiction since $s_{j^*}$ and $s_{j^*+1}$ are consecutive entries in $\family{S}(\mm{A}_{k-1})$.
	\end{proof}

        Lemma~\ref{lem:onerowsortinghelps} has the following curious corollary.
	\begin{corollary}
		\label{cor:onerowtakemagnitudes}
		Let $\mm{A} = [a_1, ..., a_n]$ and $\mm{A}' = [|a_1|, ..., |a_n|]$. Then $\lindisc(\mm{A}) = \lindisc(\mm{A}')$.
	\end{corollary}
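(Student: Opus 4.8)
The plan is to observe that the linear discrepancy of a one-row matrix depends only on the multiset of absolute values of its entries, a dependence that Lemma~\ref{lem:onerowsortinghelps} already exhibits explicitly. The corollary will then follow essentially for free.

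First I would recall, as in the discussion preceding the lemma, that $2\cdot\lindisc(\mm{A})$ equals the width of the largest gap between consecutive elements of the subset-sum multiset $\family{S}(\mm{A}) = \{\mm{A}\vv{x} : \vv{x} \in \{0,1\}^n\}$. Since this multiset is invariant under permuting the entries of $\mm{A}$, so is $\lindisc(\mm{A})$. Hence, without loss of generality, I may assume the entries of both $\mm{A}$ and $\mm{A}'$ are sorted in non-increasing order of absolute value; crucially, $\mm{A}$ and $\mm{A}'$ then present identical sorted sequences of absolute values, since $\mm{A}'$ is obtained from $\mm{A}$ merely by discarding signs.

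Second I would run the recursion supplied by Lemma~\ref{lem:onerowsortinghelps}. Writing $\ell_k$ for the largest gap in $\family{S}(\mm{A}_k)$ with $\mm{A}_k = [a_1, \ldots, a_k]$, the base case is $\ell_1 = |a_1|$, since $\family{S}(\mm{A}_1)$ consists of exactly the two values $s_0$ and $s_0 + |a_1|$. Because the entries are introduced in non-increasing order of magnitude, each new $a_k$ has the smallest magnitude among $a_1, \ldots, a_k$, so the hypothesis of the lemma is met at every step and it yields $\ell_k = \max(|a_k|, \ell_{k-1} - |a_k|)$. This recurrence reads only the absolute values $|a_k|$ together with the previous gap $\ell_{k-1}$, and never inspects any sign; by an immediate induction, $\ell_n$ is a function of the multiset $\{|a_1|, \ldots, |a_n|\}$ alone. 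Since $\mm{A}$ and $\mm{A}'$ share this multiset, they produce the same value of $\ell_n$, whence $\lindisc(\mm{A}) = \frac{1}{2}\ell_n = \lindisc(\mm{A}')$.

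There is no genuinely hard step here, as the substantive content has already been isolated in Lemma~\ref{lem:onerowsortinghelps}. The only point deserving care is the permutation-invariance observation, which is what licenses sorting the entries by magnitude so that the lemma's hypothesis holds at each stage; once that is in place, the corollary is simply the remark that the governing recurrence depends on the entries of $\mm{A}$ only through their absolute values.
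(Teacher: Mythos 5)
Your proof is correct and takes essentially the same route as the paper, which states the corollary as an immediate consequence of Lemma~\ref{lem:onerowsortinghelps} precisely because the recursion $\ell_k = \max(|a_k|, \ell_{k-1} - |a_k|)$, with base case $\ell_1 = |a_1|$, reads only the magnitudes of the entries. Your write-up merely makes explicit the two points the paper leaves implicit, namely the permutation-invariance of $\family{S}(\mm{A})$ (which licenses sorting by magnitude so the lemma applies at every step) and the induction showing $\ell_n$ is a function of the multiset $\{|a_1|, \ldots, |a_n|\}$ alone.
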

	
	Lemma \ref{lem:onerowsortinghelps} and Corollary \ref{cor:onerowtakemagnitudes} suggest an algorithm: replace the entries of $\mm{A}$ by their magnitudes. Sort $\mm{A}$. Consider each entry in turn and update the largest gap accordingly. See Algorithm \ref{alg:lindisc}.
	
	\begin{proof}[Proof of Theorem \ref{thm:exact-one-row}]
		By Corollary $\ref{cor:onerowtakemagnitudes}$ it is sufficient to consider row matrices with non-negative entries. Suppose that $\mm{A} = [a_1, ..., a_n]$ is such a matrix with entries sorted in decreasing order. Algorithm \ref{alg:lindisc} correctly outputs the linear discrepancy for matrices with a single entry. Let $\mm{A}_i = [a_1, ..., a_i]$. Lemma \ref{lem:onerowsortinghelps} gives us a recursive method for computing the largest gap in $\family{S}(\mm{A}_{i+1})$ from the largest gap in $\family{S}(\mm{A}_i)$. Since $\lindisc(\mm{A})$ is half the size of the largest gap in $\family{S}(\mm{A})$, Algorithm \ref{alg:lindisc} computes $\lindisc(\mm{A})$ as required.
	\end{proof} 
	
	\begin{algorithm}
		\DontPrintSemicolon 
		\KwIn{Matrix $\mm{A} \in \QQ^{1\times n}$.}
		\KwOut{$\lindisc(\mm{A})$.}
		\For{$i$ from $1$ to $n$}{
			$\mm{A}[i] \leftarrow |a_i|$\;
		}
		sort $\mm{A}$ in decreasing order\;
		$\ell \leftarrow a_1$\;
		\For{$i$ from $2$ to $n$}{
			$\ell \leftarrow \max(a_i, \ell - a_i)$\;
		}
		\Return{$\frac{\ell}{2}$}\;
		\caption{Linear discrepancy of row matrix.}
		\label{alg:lindisc}
	\end{algorithm}
	
	Thus, for any row matrix $\mm{A}$ with $n$ elements, we can find $\lindisc(\mm{A})$ in time $O(n \log n)$.  
	
		
	
	\subsubsection{One Row Linear Discrepancy Rounding}	
	Let $\lindisc(\mm{A}) = \ell$. By the definition of linear discrepancy, for every $\vv{w} \in [0,1]^{n}$ there exists an $\vv{x} \in \{0,1\}^{n}$ such that $\norm{\mm{A}(\vv{w} - \vv{x})}_{\infty} \leq \ell$. In-fact, if $\vv{w}$ is not a deep-hole, there exists an $\vv{x}$ which satisfies $\norm{\mm{A}(\vv{w} - \vv{x})}_{\infty} < \ell$. However it is not obvious that finding such an $\vv{x}$ can be done efficiently i.e. in polynomial time with respect to the bit complexity of $\mm{A}$ and $n$. By reducing from the subset-sum problem, we observe that it is difficult to compute $\lindisc(\mm{A}, \vv{w})$ let alone find an $\mm{x}$ which minimizes $\norm{\mm{A}(\vv{w} - \vv{x})}_{\infty} \leq \ell$.
	
	\begin{proof}[Proof of Theorem \ref{thm:approx-one-row}]
		To begin, let $\mm{A} = [a_1, ..., a_n]$ for positive $a_i$ in non-increasing order. We will consider $\mm{A}$ with arbitrary entries at the end. Let $w = \mm{A}\vv{w}$. As before, let $\family{S}(\mm{A}) = [s_{0}, ..., s_{2^{n}-1}]$ be the subset-sums of $\mm{A}$ where each $s_i = \mm{A}\vv{x}$ for an $\vv{x} \in \{0,1\}^n$ and $s_i \leq s_{i+1}$ for all $i$. Recall that $2\cdot\lindisc(\mm{A})$ is the largest gap between any two consecutive entries in $\family{S}(\mm{A})$. Our algorithm will find a pair of subset sums containing $w$. If we can show that the size of the interval between these two subset sums is no more than the gap between some two consecutive entries in $\family{S}(\mm{A})$, then the closest subset sum to $w$ among these two will be within $\lindisc(\mm{A})$ of $w$.
		
		Just as in Algorithm \ref{alg:lindisc}, we refine the interval between two subset sums containing $w$ by incrementally adding the entries of $\mm{A}$ in decreasing order. Initially our interval is $g_0 = [0, \sum_{i = 1}^{n} a_i]$. We maintain the invariants: (1) $w \in g_i$ for all $i$, and (2) the end-points of $g_i$ are subset sums.
		
		Suppose $w \in g_i = [u, v]$ and we are considering $a_i$. If $u + a_{i} > w$ then set $v \leftarrow \min(v, u + a_i)$. Otherwise let $u \leftarrow u + a_i$. Algorithm \ref{alg:lindiscvariant} computes this interval and the associated vectors $\vv{u}$ and $\vv{v}$ representing its endpoints.
		
		Consider the values of $u$ and $v$ at the end of the algorithm. We claim that the final interval $[u,v]$ is at most the width of some gap between two consecutive terms in $\family{S}(\mm{A})$, the array of all subset sums of $\mm{A}$. Notice $u = a_1u_1 + \cdots + a_nu_n$ where $\vv{u} = [u_1, ..., u_n]$ is an endpoint of the interval once Algorithm \ref{alg:lindiscvariant} completes.  
		
		We partition $\vv{u}$ into maximal blocks where all entries in the same block have the same value i.e. $[u_1, u_2, ..., u_{\ell_1}], ..., [u_{\ell_r + 1}, u_{\ell_r+2}, ..., u_{n}]$ such that $u_{\ell_i + 1} = u_{\ell_i + 2} = \cdots = u_{\ell_{i+1}}$ for $i = 0, 1, ..., r-1$ where $\ell_0 = 0$.
		
		We claim that Algorithm \ref{alg:lindiscvariant} outputs an interval containing $w$ whose width is at most the distance between some two consecutive entries in $\family{S}(\mm{A})$. The proof is by induction on $r$, the number of blocks. In the base case, $r = 1$ and there is only one block. Thus $u = 0$ or $u = \sum a_i$. In the case where $u = 0$, we must have $a_i > w$ for all $i \in [n]$, and $v = a_n$. Thus $w \in [0, a_n]$ with consecutive elements $0$ and $a_n$ of $\family{S}(\mm{A})$. In the latter case when $u = \sum a_i$, we can output $\vv{w}$ since it is already a subset sum.
		
		Suppose next that the claim holds for all matrices where the algorithm outputs a vector $\vv{u}$ with $k$ blocks, and we will show that it still holds for a matrix $\mm{A}$ whose output $\vv{u}$ has $k+1$ blocks. Let $\vv{u}' = [u_1, ..., u_{\ell_{k+1}}]$ and $\vv{v}' = [v_1, ..., v_{\ell_{k+1}}]$ be the final vectors after running the algorithm on $\mm{A}' = [a_1, ..., a_{\ell_{k+1}}]$. Further let $u' = \sum_{i = 1}^{\ell_{k+1}} a_iu_i$ and $v' = \sum_{i = 1}^{\ell_{k+1}} a_iv_i$. By the induction hypothesis, the width of $[u', v']$ is at most the distance between some two consecutive elements in the list of subset sums of $\mm{A}'$. The last block of $\vv{u}$ is $[u_{\ell_{k+1} + 1}, ..., u_{n}]$. The entries of this block are either all zeros or all ones. Consider each case in-turn. 
		
		First suppose $u_{\ell_{k+1}+1} = \cdots = u_{n} = 0$. Since none of the $a_i$ for $i = \ell_{k+1} + 1, ..., n$ were added to $u$, it must be the case that $u' + a_i > w$ for all such $i$. Thus the interval $[u, v] = [u', \min\left(v', u' + a_n\right)]$ has width at most $a_n$. Since $0$ and $a_n$ are consecutive in $\family{S}(\mm{A})$, as $|a_n|$ is the entry with the smallest magnitude in $\mm{A}$, the output interval satisfies our requirements.  
		
		Next suppose $u_{\ell_{k+1}+1} = \cdots = u_{n} = 1$. It must be the case that $u = u' + a_{\ell_{k+1} + 1} + \cdots + a_n \leq w$. Observe that $a_{\ell_{k+1}}$ is in the $k$\textsuperscript{th} block and so $u_{\ell_{k+1}} = 0$. Let $[u'', v'']$ be our interval after processing the $k-1$\textsuperscript{st} block i.e. $u'' = \sum_{i = 1}^{\ell_k} a_iu_i$ and $v'' = \sum_{i = 1}^{\ell_k} a_iv_i$. Notice that since none of the entries in the $k$\textsuperscript{th} block were added to $u''$, we must have $u'' + a_{i} > w$ for all $i = \ell_{k} + 1, ..., \ell_{k+1}$. In such cases, we always update $v'' \leftarrow \min(v'', u'' + a_{i})$ after each such $i$, thus the interval $[u', v']$ has width at most $a_{\ell_{k+1}}$. Thus it suffices to show that $a_{\ell_{k+1} + 1} + \cdots + a_n$ and $a_{\ell_{k+1}}$ are consecutive in $\family{S}(\mm{A})$. First note that $a_{\ell_{k+1} + 1} + \cdots + a_n \le a_{\ell_{k+1}}$ since $u' + a_{\ell_{k+1} + 1} + \cdots + a_n \le w \le v' \le u' + a_{\ell_{k+1}}$. The two subset sums then are also consecutive, since $a_i > a_{\ell_{k+1}}$ for all $i < \ell_{k+1}$. 
		
		Now consider the case where $\mm{A}$ can have both positive and negative entries. Without loss of generality we can assume that none of the entries are zero. Let ${A}_{-} = \{a_i \in \mm{A}: a_i < 0\}$ and ${A}_{+} = \{a_i \in \mm{A}: a_i > 0\}$. It suffices to set $u_0 = \sum_{a \in {A}_{-}}a$ and $v_0= \sum_{a \in {A}_{+}} a$ and let $\vv{u}$ and $\vv{v}$ be the indicator vectors of $\mm{A}_{-}$ and $\mm{A}_{+}$ respectively. The remainder of the algorithm is identical except that the matrix should be sorted in decreasing order of \emph{magnitude} and every time an element $a_i \in \mm{A}_{-}$ is added to $u$, its entry in $\vv{u}$ should be set to zero. 
	\end{proof} 
	
	\begin{algorithm}
		\DontPrintSemicolon 
		\KwIn{A vector $\vv{w} \in [0,1]^n$ and a row matrix $\mm{A} = [a_1, ..., a_n]$ of positive integers sorted in increasing order.}
		\KwOut{A vector $\vv{x} \in \{0,1\}^{n}$ such that $\norm{\mm{A}(\vv{w} - \vv{x})}_{\infty} \leq \lindisc(\mm{A})$.}
		$\mm{A} \leftarrow \op{sort-decreasing}(\mm{A})$\;
		$\vv{u} \leftarrow \op{zeros}(n)$\;
		$\vv{v} \leftarrow \op{ones}(n)$\;
		$w \leftarrow \mm{A}\vv{w}$, $u \leftarrow \mm{A}\vv{u}$, $v \leftarrow \mm{A}\vv{v}$\;
		\Return $\vv{v}$ if $w == v$\;
		\For{$k = 1..n$}{
			\If{$u + a_k > w$}{
				$v \leftarrow \min\left(v, u + a_k\right)$\;
				\If{$v == u + a_k$}{
					$\vv{v} \leftarrow \op{copy}(\vv{u})$\;
					$\vv{v}[k] \leftarrow 1$\;
				}
			} \Else{
				$u \leftarrow u + a_{k}$\;
				$\vv{u}[k] \leftarrow 1$\;
			}
		}
		\Return{$\vv{u}$ if $u$ is closer to $w$ else $\vv{v}$}\;
		\caption{Finding a close subset sum to $\mm{A}\vv{w}$.}
		\label{alg:lindiscvariant}
	\end{algorithm}

	\subsection{Constant Rows with Bounded Matrix Entries}
	Let $\mm{A} \in \ZZ^{d \times n}$ with $\max_{i,j} |A_{i,j}| \leq \delta$. Let $Z = \mm{A}[0,1]^{d}$ be the zonotope of $\mm{A}$ and let $T = [-n\delta, n\delta]^{d} \cap \ZZ^{d}$ be the set of all integer lattice points of $Z$. The following algorithm computes $\lindisc(\mm{A})$ in polynomial time with respect to $n$ for fixed $d$ and $\delta$. The algorithm makes use of Lemma~\ref{lem:lec-in-higher-dimensions}, which is proved in the Appendix.

	\begin{proof}[Proof of Theorem \ref{thm:exact-const-row}]
		
	For every one of the $(2n\delta + 1)^{d}$ integral points $\vv{b} \in T$, compute whether $\mm{A}\vv{x} = \vv{b}$ for some $\vv{x}\in \{0,1\}^n$ using dynamic programming. This procedure generalizes dynamic programming algorithms for knapsack and subset sum and will be outlined in the following. Let $\vv{a}_1, ..., \vv{a}_n$ be the columns of $\mm{A}$. Construct a matrix $\mm{M}$ with dimensions $[-n\delta, n\delta]^{d} \times n$. Cell $(\vv{v}, i)$ of $\mm{M}$ contains the indicator $[\mm{M}(\vv{v}-\vv{a}_i, i-1) \lor \mm{M}(\vv{v},i-1)]$; this corresponds to a linear combination of the first $i-1$ columns of $\mm{A}$ which adds up to $\vv{v}-\vv{a}_i$ or a linear combination of the first $i-1$ columns which adds up to $\vv{v}$. The first column of $\mm{M}$ is the indicator vector for $\{\vv{a}_1\}$. Computing the entries of $\mm{M}$ takes time $O(2n\delta)^{d+1}$. $\mm{M}(\vv{b},n)$ indicates the feasibility of $\mm{A}\vv{x} = \vv{b}$. Computing this for all $\vv{b}$ takes time $O(2n\delta)^{d+1}$. Let $S \subseteq T$ be the set of points $\vv{b}$ in $Z$ such that $\mm{A}\vv{x}=\vv{b}$ for some $\vv{x}\in \{0,1\}^n$, and set $|S| = N$.
	
	Apply Lemma~\ref{lem:lec-in-higher-dimensions} to the points of $S$ in $\metric{\infty}$-norm. The output is some radius $r$ and point $\vv{x}^*$ such that the $\ell_\infty$-ball centered at $\vv{x}^{*}$ with radius $r$ is the largest such ball with center inside the convex hull of $S$ not containing any points of $S$. Note that $r$ is in-fact the linear discrepancy of $\mm{A}$. Since $r$ and $\vv{x}^*$ can be computed in time $O(N^d)$, $\lindisc(\mm{A})$ can be computed in time $O(2n\delta)^{d^2 + d}$. 
	\end{proof}

	\subsection{Poly-time Approximation Algorithm}
	Next, we prove Theorem~\ref{thm:approx}, presenting a
        $2^n$-approximation algorithm for linear discrepancy. Recall
        that $\round(\vv{w})$ is the function which rounds each
        coordinate of $\vv{w}$ to its nearest integer (with ties
        broken arbitrarily). Let the $p$-to-$q$ operator norms of a matrix $\mm{A}$ be:
    \[\norm{\mm{A}}_{p \rightarrow q} = \max_{\vv{x}\in \RR^n\setminus\{0\}} \frac{\norm{\mm{A}\vv{x}}_{q}}{\norm{\vv{x}}_{p}}.\]
    Note that 
	\[\lindisc(\mm{A}) \leq \max_{\vv{w} \in [0,1]^n} \norm{\mm{A}(\mm{w} - \round(\mm{w}))}_{\infty} \le \frac{1}{2} \max_{\vv{z} \in [-1,1]^n}\norm{\mm{A}\vv{z}}_{\infty} = \frac{1}{2}\norm{\mm{A}}_{\infty \rightarrow \infty}.\]
	
	To bound $\lindisc(\mm{A})$ from below, we show that $\norm{\mm{A}}_{\infty \rightarrow \infty} \leq 2^{n+1} \cdot \lindisc(\mm{A})$. This completes the proof of the theorem, since $\norm{\mm{A}}_{\infty \to \infty}$ equals the largest $\ell_1$ norm of any row of $\mm{A}$, and can be computed in polynomial time.

        Let us try to interpret the statement $\norm{\mm{A}}_{\infty \rightarrow \infty} \leq 2^{n+1} \cdot \lindisc(\mm{A})$. Note that $\norm{\mm{A}\vv{z}}_{\infty}$ is equal to the Minkowski $\polytope{P}$-norm $\norm{\vv{z}}_{\polytope{P}}$ for $\polytope{P} = \{\vv{x}: \norm{\mm{A}\vv{x}}_{\infty} \leq 1\}$ i.e. $\norm{\vv{z}}_{\polytope{P}} = \inf\{t \geq 0: \vv{z} \in t\polytope{P}\}$ so
	\[\norm{\mm{A}}_{\infty \rightarrow \infty} = \max_{\vv{z} \in [-1, 1]^n} \norm{\mm{A}\vv{z}}_{\infty} = \max_{\vv{z} \in [-1, 1]^n} \norm{\vv{z}}_{\polytope{P}}.\]
	By interpreting $\vv{z}$ as the difference of two vectors $\vv{x}, \vv{x}' \in [0,1]^n$ we have that 
	\[\norm{\mm{A}}_{\infty \rightarrow \infty} = \max_{\vv{z} \in [-1,1]^n} \norm{\vv{z}}_{\polytope{P}} = \max_{\vv{x}, \vv{x}' \in [0,1]^n} \norm{\vv{x} - \vv{x}'}_{\polytope{P}}.\]
	It is an easy, and well-known fact that $\lindisc(\mm{A})$ is the smallest $t$ such that $[0,1]^n \subseteq \bigcup_{\vv{x}\in \{0,1\}^n}(\vv{x} + \polytope{P})$; see  \cite{matouvsek1999geometric}. We then just need to show that the diameter of the unit hyper-cube with respect to the Minkowski $\polytope{P}$-norm is no more than this scale-factor $t$ times $O(2^n)$. 
	We prove the following more general statement.
	\begin{lemma}
		\label{lem:approx-algorithm}
		Let $\polytope{K}$ be a convex symmetric polytope and $S \subset \RR^n$ be convex. Suppose there exist $N$ elements $x_1, ..., x_N \in S$ such that 
		\[S \subseteq \bigcup_{x_i} x_i + t\polytope{K}.\]
		Then $ \max_{x, x' \in S} \norm{x - x'}_{\polytope{K}} \leq 2tN$.
	\end{lemma}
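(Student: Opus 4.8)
I want to bound the $\polytope{K}$-diameter of $S$ using the covering of $S$ by $N$ translates of $t\polytope{K}$. The natural approach is to take two arbitrary points $x, x' \in S$, connect them by the straight segment $[x,x']$ (which lies in $S$ by convexity), and walk along this segment, charging its length in the $\polytope{K}$-norm to the covering pieces it passes through. The plan is to show that the segment cannot use ``too much'' of any single translate $x_i + t\polytope{K}$, and since there are only $N$ translates, the total $\polytope{K}$-length is controlled by $N$ and $t$.

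**Reviewing the geometry.** Let me reconstruct the key estimate. Fix $x, x' \in S$ and consider the segment $\gamma(\lambda) = (1-\lambda)x + \lambda x'$ for $\lambda \in [0,1]$. The $\polytope{K}$-norm distance we want to bound is $\|x - x'\|_{\polytope{K}}$, and since $\polytope{K}$ is a symmetric convex polytope, $\|\cdot\|_{\polytope{K}}$ is a genuine (semi)norm, so the length of $\gamma$ in this norm is exactly $\|x - x'\|_{\polytope{K}}$. The covering hypothesis says each point $\gamma(\lambda)$ lies in some $x_i + t\polytope{K}$. The crucial observation is that if a sub-segment of $\gamma$ stays entirely inside a single translate $x_i + t\polytope{K}$, then by the triangle inequality its $\polytope{K}$-length is at most the $\polytope{K}$-diameter of $x_i + t\polytope{K}$, which is $2t$ (since $\polytope{K}$ is symmetric, its $\polytope{K}$-diameter is $\operatorname{diam}_{\polytope{K}}(\polytope{K}) = 2$). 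So I would argue: partition $[0,1]$ into maximal sub-intervals on each of which $\gamma$ lies in a single chosen translate. Each translate contributes at most $2t$ to the length; if I can ensure each of the $N$ translates is ``entered'' at most once — or at least that the number of sub-intervals is at most $N$ — then $\|x-x'\|_{\polytope{K}} \le 2tN$ follows immediately.

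**Identifying the obstacle.** The main difficulty is exactly this last point: the segment $\gamma$ might re-enter the same translate multiple times, so naively the number of pieces could exceed $N$. I expect the cleanest fix is \emph{not} to insist each piece lie in a single translate, but rather to bound the length piece-by-piece differently. Concretely, I would cover $\gamma$ greedily: let $\lambda_0 = 0$; having reached $\gamma(\lambda_j)$, pick a translate $x_{i_j} + t\polytope{K}$ containing it, and let $\lambda_{j+1}$ be the largest parameter for which $\gamma([\lambda_j, \lambda_{j+1}])$ still lies in that translate. Each step advances by $\polytope{K}$-length at most $2t$. The key combinatorial claim is that this greedy process terminates in at most $N$ steps, which I would establish by arguing no translate is selected twice: because $\gamma$ is a straight segment and each translate is convex, once $\gamma$ exits a translate it never returns to it. That convexity-plus-collinearity argument (a line meets a convex set in an interval) is the heart of the proof, and is where I would focus the rigor; everything else is triangle-inequality bookkeeping.

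**Assembling the bound.** Finally, combining the pieces: with at most $N$ greedy steps each of $\polytope{K}$-length at most $2t$, the triangle inequality along the partition gives $\|x - x'\|_{\polytope{K}} \le \sum_{j} \|\gamma(\lambda_{j+1}) - \gamma(\lambda_j)\|_{\polytope{K}} \le 2tN$. Since $x, x'$ were arbitrary, taking the maximum over $x, x' \in S$ yields $\max_{x,x' \in S} \|x - x'\|_{\polytope{K}} \le 2tN$, as required. I would then apply this with $\polytope{K} = \polytope{P}$, $S = [0,1]^n$, $t = \lindisc(\mm{A})$, and $N = 2^n$ (the number of vertices $\vv{x} \in \{0,1\}^n$) to conclude $\norm{\mm{A}}_{\infty \to \infty} \le 2^{n+1}\cdot\lindisc(\mm{A})$.
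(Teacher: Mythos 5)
Your proposal is correct and follows essentially the same route as the paper: both walk along the segment joining $x$ and $x'$ (which lies in $S$ by convexity), use the convexity of the translates $x_i + t\polytope{K}$ to argue that the segment meets each translate in a single interval so that at most $N$ of them are used, and finish with the triangle inequality to get $2tN$. The only cosmetic difference is bookkeeping: you bound each sub-segment by the $\polytope{K}$-diameter $2t$ of a translate, while the paper hops through the centres $x_{k_1}, \ldots, x_{k_r}$ with $\norm{x_{k_i} - x_{k_{i+1}}}_{\polytope{K}} \leq 2t$ and endpoint terms of $t$ each, yielding the same total $t + 2t(N-1) + t = 2tN$.
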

	\begin{proof}
		Fix any two points $x$ and $x'$ in $S$. Let $\polytope{P}_i$ be the polytope $x_i + t\polytope{K}$. Since $S$ is convex, the line segment $\lambda x + (1 - \lambda)x'$ for $\lambda \in [0,1]$ is in $S$. Therefore $\lambda x + (1 - \lambda)x'$ intersects a sequence of polytopes $\polytope{P}_{k_1}, ..., \polytope{P}_{k_r}$ with centres $x_{k_1}, ..., x_{k_r}$, such that any two consequtive polytopes in the sequence intersect. Since the polytopes are convex, we can assume that they appear in the sequence at most once, so $r \le N$. By the triangle inequality we have
		\begin{align*}
			\norm{x - x'}_K 
			&= \norm{(x - x_{k_1}) + (x_{k_1} - x_{k_2}) + \cdots + (x_{k_r} - x')}_{K}\\ 
			&\leq \norm{x - x_{k_1}}_{K} + \norm{x_{k_1} - x_{k_2}}_{K} + \cdots + \norm{x_{k_r} - x'}_{K}\\
			&\leq t + 2t(N-1) + t = 2tN
		\end{align*}
		where the last inequality follows as $x \in \polytope{P}_{k_1}$, $x' \in \polytope{P}_{k_r}$, and $\norm{x_{k_i} - x_{k_{i+1}}}_K \leq 2t$.
	\end{proof}
	
	\begin{proof}[Proof of Theorem \ref{thm:approx}]
		In Lemma \ref{lem:approx-algorithm}, set $\polytope{K}$ to be the parallelepiped defined by $\mm{A}$, $S = [0,1]^n$, $t = \lindisc(\mm{A})$, and $\{x_1, ..., x_N\} = \{0,1\}^{n}$. 
	\end{proof}

\section{Open Problems}
Because of the similarity between the closest vector problem and
linear discrepancy, we suspect that linear discrepancy is also
$\Pi_2$-complete, and the hardness result of Theorem
\ref{thm:hardness} is, in this sense, not tight. Further, Haviv and
Regev also showed that $\class{CRP}$ is $\Pi_2$-hard to approximate to
with-in a factor of $\frac{3}{2}$, and we conjecture that a similar hardness
of approximation result should hold for linear discrepancy.

We suspect that the algorithm used to prove Theorem
\ref{thm:exact-one-row} can be generalized to matrices $\mm{A} \in
\QQ^{d \times n}$ with running time $\tilde{O}(n^{d})$. This would be
a substantial improvement on the $O\left(d(n\delta)^{d^2}\right)$
running time algorithm used to prove Theorem
\ref{thm:exact-const-row}, and would be independent of the magnitude of the largest entry of $\mm{A}$. 

It is also interesting to extend the largest empty ball algorithm from
Lemma~\ref{lem:lec-in-higher-dimensions} to other $\ell_p$ norms, or
even arbitrary norms, given appropriate access to the norm
ball. Currently, this seems rather difficult as  Voronoi diagrams with
respect to the $\metric{p}$-norm for $p \in (2, \infty)$ are poorly behaved. For the standard
$\metric{2}$-norm Voronoi diagram in $\RR^{d}$, it is the case that $d
+ 1$ affinely independent vertices are equidistant to exactly one
point. This is no longer the case even in $\RR^{3}$ for
$\metric{4}$-norm \cite{icking1995convex}. In particular, there exists
a set of four vertices such that the intersection of their pair-wise
bisectors has size three. The situation is even worse for general strictly convex norms. There exists such norms where the pair-wise bisectors of a set of four points in $\RR^{3}$ can have arbitrarily many intersections.

We currently also have no evidence that the approximation factor in
Theorem~\ref{thm:approx} is tight. One possibility is that there
exists an approximation preserving reduction from the closest vector
problem in lattices to linear discrepancy. This would show
that one cannot expect a significant improvement to
Theorem~\ref{thm:approx} without also improving the best polynomial
time approximation to the covering radius, which is currently also
exponential in the dimension $n$. On the other hand, we also
conjecture that the approximation factor in Theorem~\ref{thm:approx}
can be taken to be a function of $\min\{m, n\}$, or even of the rank
of the matrix $\mm{A}$.

\printbibliography

\appendix
\section{Appendix}
\subsection{Bit Complexity of Linear Discrepancy}
\label{app:lemma-certificateinpi2}
\begin{proof}[Proof of Lemma \ref{lem:certificateinpi2}.]
	Let  $\mm{r}_i$ for $i \in [m]$ be the rows of $\mm{A}$, $\lindisc(\mm{A}) = \lambda_A$, and $\vv{w}^*$ be a deep-hole of $\mm{A}$. For every $\vv{x} \in \{0,1\}^{n}$ there exists an $i \in [m]$ and $\sigma \in \{-1, 1\}$ such that $\sigma\vv{r}_i(\vv{w}^* - \vv{x}) \geq \lambda_A$. Let $\vv{b}_{x} = \sigma\mathbf{r}_i$ and consider the following linear program over the variables $\vv{w} \in \RR^n$ and $\lambda \in \RR$:
	\begin{align*}
	\mbox{Maximize: } &\lambda\\
	\mbox{Subject to: } &\mathbf{b_x}(\mathbf{w}-\mathbf{x}) \geq \lambda &\mbox{for all } \mathbf{x} \in \{0,1\}^n\\
	&\mathbf{0} \leq \mathbf{w} \leq \mathbf{1}
	\end{align*}
	
	Let $\lambda^*$ be the optimum value of this
	linear program. First note that $\lambda_A
	\leq \lambda^*$ since $(\vv{w}^*,\lambda)$
	satisfies the constraints. Next we show that
	$\lambda_A \geq \lambda^*$. Suppose, towards contradiction, that $\lambda_A < \lambda^*$. Then there exists $\vv{w}' \in [0,1]^n$ such that 
	\[\|\mm{A}(\vv{w}' - \vv{x})\|_\infty \ge \vv{b}_x(\vv{w}' - \vv{x}) \ge \lambda^*>\lambda_A\] 
	for every $\vv{x} \in \{0,1\}^{n}$. Since $\lambda_A = \lindisc(\mm{A})$, we cannot have $\lindisc(\mm{A}, \mathbf{w}') > \lambda_A$. Thus $\lambda^* = \lindisc(\mm{A})$. Since this LP has $n$ variables, the number of bits required to express the linear discrepancy and some deep-hole $\vv{w}^*$ of $\mm{A}$ are polynomial in $n$ and the bit complexity of the largest entry of $\mm{A}$ \cite{schrijver1998theory}. 
\end{proof} 
	
\subsection{Largest Empty Ball Problem}
\label{app:largest-empty-ball}
Let $V$ be a set of $n$ points in the plane and let $\ch(V)$ denote the convex hull of $V$. The largest empty circle problem, denoted LEC, takes $V$ and outputs both a radius $r$ and point $\vv{x}^* \in \ch(V)$ such that the circle centered at $\vv{x}^*$ with radius $r$ is the largest empty circle not containing any point of $V$. We generalize this problem to other norms and to higher dimensions as follows: $V$ is a set of $n$ points in $\RR^d$, and the goal is to compute a point $\vv{x}^*$ in $\ch(V)$ such that $\vv{x}^* + rB$ does not contain any point of $V$, where $B$ is the unit ball of either the $\ell_2^d$ or the $\ell_\infty^d$ norm. In the following we present an algorithm which solves this largest empty ball (LEB) problem.

\begin{lemma}{\textup{(LEC in Higher Dimensions.)}}
	\label{lem:lec-in-higher-dimensions}
	Let $V$ be a set of $n$ points in $\RR^{d}$ for some fixed constant $d$. The LEB of $V$, in both $\metric{2}$- and $\metric{\infty}$-norms, can be computed in time $O(n^{d})$.
\end{lemma}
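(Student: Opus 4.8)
The plan is to characterize the optimal center as a local maximum of the nearest–site distance function and then enumerate a polynomial family of candidate centers built from the (nearest–point) Voronoi diagram of $V$ and the face lattice of $\ch(V)$. Write $r(\vv{x}) = \min_{\vv{v}\in V}\norm{\vv{x}-\vv{v}}$ for the $\metric{2}$- or $\metric{\infty}$-distance from $\vv{x}$ to its nearest site, so that the LEB problem is exactly $\max_{\vv{x}\in\ch(V)} r(\vv{x})$: the optimal radius is this maximum value, and the optimal center is a maximizer. Since $r$ is continuous and $\ch(V)$ is compact, a maximizer $\vv{x}^*$ exists; let $F$ be the unique face of $\ch(V)$ having $\vv{x}^*$ in its relative interior. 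First I would observe that $\vv{x}^*$ is a local maximum of $r$ restricted to the affine hull of $F$, and that local maxima of $r$ along a flat are precisely the vertices of the restriction of the Voronoi diagram of $V$ to that flat (points equidistant, within the flat, to the sites of the surrounding cells and locally farthest from all sites). This yields the key structural claim: $\vv{x}^*$ is either a Voronoi vertex lying inside $\ch(V)$, or, for some proper face $F$, a vertex of the subdivision that the Voronoi cells induce on the affine hull of $F$.

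Second, I would turn this into an algorithm. Compute the nearest–point Voronoi diagram of $V$ and the convex hull $\ch(V)$; in fixed dimension $d$ the diagram has $O(n^{\lceil d/2\rceil})$ faces and $\ch(V)$ has $O(n^{\lfloor d/2\rfloor})$ faces, and both are built within the stated budget. The candidate set is then (i) every Voronoi vertex contained in $\ch(V)$, and (ii) for every proper face $F$, every vertex of the Voronoi diagram restricted to the affine hull of $F$ that lies in the relative interior of $F$. For each candidate the radius is read off from its defining equidistant sites, and its validity — membership in $\ch(V)$ and that no site is strictly closer — is verified while traversing the diagram; the algorithm returns the valid candidate of largest radius together with its center. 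Correctness is immediate from the structural claim, since the true maximizer occurs among the candidates.

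Third, I would bound the running time. The number of type-(i) candidates is $O(n^{\lceil d/2\rceil})$. For type (ii), the restriction of the ($\metric{2}$- or $\metric{\infty}$-) Voronoi diagram to a $k$-flat is a weighted (power) diagram of $n$ sites in $\RR^k$, with $O(n^{\lceil k/2\rceil})$ vertices; summing over all faces of $\ch(V)$ and using that the total number of faces is $O(n^{\lfloor d/2\rfloor})$, the dominant term comes from the facets ($k=d-1$) and is $O(n^{\lfloor d/2\rfloor}\cdot n^{\lceil (d-1)/2\rceil}) = O(n^{d})$, since $\lfloor d/2\rfloor + \lceil (d-1)/2\rceil \le d$. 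Treating $d$ as constant and evaluating each candidate in amortized $O(\mathrm{polylog}\,n)$ time via point location in the diagram, the overall running time is $O(n^{d})$.

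The main obstacle is twofold. First, making the boundary case fully rigorous: I must show that a maximizer in the relative interior of a face $F$ is genuinely a vertex of the induced subdivision, i.e.\ that the (sub)gradients of the active distance functions, projected onto the direction space of $F$, have $\mathbf{0}$ in their convex hull at $\vv{x}^*$, which forces equidistance to enough sites to pin the point down within $F$. Second, and more delicate, is the $\metric{\infty}$ case: the $\metric{\infty}$-bisector of two points is only piecewise linear, and more than $d+1$ sites can be equidistant to a common center over a full-dimensional region — the degeneracy flagged in the Open Problems section for general $\metric{p}$. I would handle this by exploiting the fact that, unlike $\metric{p}$ for $p\in(2,\infty)$, the $\metric{\infty}$ (Chebyshev) Voronoi diagram stays piecewise linear with the same worst-case complexity $O(n^{\lceil d/2\rceil})$ in fixed dimension, so the enumeration and the count carry over verbatim; the degeneracies need only be resolved (e.g.\ by symbolic perturbation) to make the candidate vertices well defined, without affecting the $O(n^{d})$ bound.
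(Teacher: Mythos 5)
Your proposal is correct and follows essentially the same route as the paper, which generalizes Toussaint's algorithm: your candidate set --- Voronoi vertices inside the hull plus, for each $k$-dimensional hull face, vertices of the Voronoi subdivision restricted to its affine hull --- coincides with the paper's sets $C_1$ and $C_2$ (intersection points of Voronoi faces with hull faces of complementary dimension), and your handling of the $\ell_\infty$ case by perturbation is the paper's as well. The one step you assert rather than prove --- that the $\ell_\infty$ diagram's piecewise-linear structure and complexity bounds carry over to support the same enumeration --- is exactly what the paper's Claim~\ref{claim:bound-facet-linf-bisector} supplies, by showing each $\ell_\infty$ bisector has $O(d^2)$ facets once no two sites share a coordinate.
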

\begin{proof}
	We use the following terminology. Define a face $F$ of the
        Voronoi diagram $\mathrm{vd}(V)$ of $V$ to be a subset of
        $\RR^d$ such that, for some $S\subseteq V$, and every $\vv{x}
        \in F$, $S$ are the points in $V$ closest to $\vv{x}$. In
        particular, this means that any $\vv{x}\in F$ is equidistant
        from all points in $S$. 
	
	The algorithm of Toussaint \cite{toussaint1983computing}
        computes the LEB of $n$ points $V$ in the plane with respect to the $\metric{2}$-norm as follows, 
	\begin{enumerate}
		\item Compute $\mathrm{vd}(V)$. Note that
                  $\mathrm{vd}(V)$ is the union of Voronoi faces of
                  dimension $k$, the set of which we denote $\mathrm{vd}_{k}(V)$, over all $k = 0, ..., d-1$. 
		\item Compute the convex hull of $V$, denoted $\ch(V)$. Let $h$ be the number of facets of $\ch(V)$. 
		\item Preprocess the points of $\ch(V)$ so that queries of the form ``Is a point $x$ in $\ch(V)$?'' can be answered in time $O(\log h)$. For every $v \in \mathrm{vd}_{0}(V)$, determine if $v \in \ch(V)$. Let $C_1 = \{v \in \mathrm{vd}_{0}(V): v \in \ch(V)\}$.
		\item Determine the intersection points of faces in $\mathrm{vd}_{k}(V)$ with faces of $\ch(V)$ of co-dimension $k$, for pairs of such faces that intersect at a unique point. Let $C_2$ be the set of all such intersection points.
		\item For all points $v \in C_1 \cup C_2$, find the largest empty circle centered at $v$. Output a $v$ which maximizes this radius.
	\end{enumerate}
	We find the analogue of each step for points in $\RR^{d}$ with
        respect to the $\metric{2}$-norm, and then adapt the algorithm
        to the $\metric{\infty}$-norm.
	
	In the following let $N = n^{\ceil{d/2}}$. The complexity, i.e. total number of faces of every dimension, of the $\metric{2}$-Voronoi diagram in $\RR^{d}$ for fixed $d$ is $O(N)$ and can be computed in time $O(N + n \log n)$ by a classic result of Chazelle \cite{chazelle1993optimal}. The complexity of $\ch(V)$ is $O(N)$ and can also be computed in time $O(N + n\log n)$.
	
	To determine the set $C_1$ of Voronoi intersection points inside the convex hull, we let $\mathcal{H}$ be the set of bounding hyperplanes of $\ch(V)$. Assume, without loss of generality, that $\ch(V)$ contains the origin, and, for each $H \in \mathcal{H}$, let $H^-$ be the half-space with $H$ as its boundary containing the origin. Then $\ch(V) = \bigcap_{H \in \mathcal{H}} H^-$. We simply test, for each Voronoi intersection point $\vv{v}$, whether $\vv{v} \in H^-$ for each $H\in \mathcal{H}$, in total time $O(N)$. Since there are at most $O(N)$ Voronoi intersection points, we can find $C_1$ in time $O(N^2)$. 
	
	To determine the set $C_2$ of all unique intersection points
        of $k$-faces of $\mathrm{vd}_k(V)$ and faces of $\ch(V)$ of
        co-dimension $k$ will require solving several linear
        systems. Note that the points in each face $F$ in
        $\mathrm{vd}_{k}(V)$ satisfy $d-k$ equality constraints
        $\ip{\vv{a}_1}{\vv{x}} = b_1, \ip{\vv{a}_k}{\vv{x}} = b_k$ for
        linearly independent vectors
        $\vv{a}_1, ... \vv{a}_k \in \RR^{d}$. Similarly, the points in
        each face of co-dimension $k$ of $\ch(V)$ satisfy $k$ linearly
        independent equality constraints. Since there are at most
        $O(2^dN) = O(N)$ faces of $\ch(V)$, there are at most that
        many faces of $\ch(V)$ of co-dimension $k$. We can then go
        over all Voronoi faces $F$ of dimension $k$, and all faces $G$
        of $\ch(V)$ of co-dimension $k$, and solve the corresponding
        system of $(d-k) + k = d$ linear equations. If the system has
        a unique solution, we check if that solution is in $F\cap G$,
        and, if so, we add it to $C_2$. Thus, for constant $d$, the
        size of $C_2$ and the time to compute it are bounded bounded
        above by $O(N \cdot 2^d N) = O(N^2)$.
	
	In total there are at most $O(N + N^2)$ points in $C_1 \cup C_2$ which can be computed in time $O(N^2)$. Thus solving the largest empty ball problem in dimension $d$ for constant $d$ takes time $O(n^{d})$. 
	
	Next we consider the largest empty ball problem in
        $\metric{\infty}$-norm. The convex hull remains the same, so
        we just have to consider the Voronoi diagram with respect to
        the $\metric{\infty}$-norm. Again, constructing the Voronoi
        diagram can be done in expected time
        $O(n^{\ceil{d/2}}\log^{d-1}n)$ using the randomized algorithm
        of Boissonnat et al.
        \cite{DBLP:journals/dcg/BoissonnatSTY98}. Next we consider the
        number of intersections between the Voronoi diagram and the
        convex hull. First note that Voronoi diagrams with respect to
        the $\metric{\infty}$-norm need not consist of only
        hyperplanes and their intersections. Indeed, in $\RR^{d}$, for
        two points with the same $y$-coordinate, there exists regions
        with affine dimension two which are equidistant to both
        points. To remedy this, we assume that no two points in $V$
        have the same $i$-th coordinate, for any $i \in [d]$. This is
        without loss of generality, by perturbing the points in $V$
        slightly. It remains to consider the complexity of each
        bisector in $\metric{\infty}$-norm. By Claim
        \ref{claim:bound-facet-linf-bisector}, in constant dimension
        $d$, each such bisector can have at most $O(d^2)$
        facets. Therefore, the complexity of any face of the Voronoi
        diagram, being the intersection of at most $d$ bisectors, is
        bounded by a function of $d$. Thus the bounds of the
        $\metric{2}$-norm algorithm still hold, up to constant factors
        that depend on $d$. 
\end{proof}

\begin{claim}{\textup{(Bound on Number of Facets of $\metric{\infty}$ Bisectors.)}}
	\label{claim:bound-facet-linf-bisector}
	Let $\vv{u}, \vv{v} \in \RR^{d}$ be such that assume that $u_i \neq v_i$ for all $i \in [d]$. Then the bisector $\{\vv{x}: \|\vv{x} - \vv{u}\|_\infty = \|\vv{x} - \vv{v}\|_\infty\}$ has at most $O(d^{2})$ facets.
\end{claim}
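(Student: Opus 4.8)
The plan is to exploit the piecewise-linear structure of the two distance functions $f(\vv{x}) = \norm{\vv{x}-\vv{u}}_\infty$ and $g(\vv{x}) = \norm{\vv{x}-\vv{v}}_\infty$ and to count the linear pieces of the surface $\{\vv{x} : f(\vv{x}) = g(\vv{x})\}$. First I would record that $f(\vv{x}) = \max_{i \in [d]} \max(x_i - u_i,\, u_i - x_i)$ is a maximum of $2d$ affine functions. Consequently $\RR^d$ decomposes into at most $2d$ polyhedral cones $U_i^{\sigma}$, indexed by a coordinate $i \in [d]$ and a sign $\sigma \in \{+,-\}$, where $U_i^{\sigma} = \{\vv{x} : \sigma(x_i - u_i) \geq |x_j - u_j| \text{ for all } j \in [d]\}$ is the region on which the term $\sigma(x_i - u_i)$ attains the maximum; on $U_i^{\sigma}$ the function $f$ agrees with the affine map $\vv{x} \mapsto \sigma(x_i - u_i)$, whose gradient is $\sigma \vv{e}_i$. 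I would set up the analogous cones $V_j^{\tau}$ and affine pieces for $g$.

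Next I would pass to the common refinement of the two cone partitions. Its cells are the nonempty intersections $U_i^{\sigma} \cap V_j^{\tau}$, of which there are at most $(2d)^2 = 4d^2$. On each such cell both $f$ and $g$ coincide with affine functions, so the difference $f - g$ restricts to an affine function with gradient $\sigma \vv{e}_i - \tau \vv{e}_j$. Hence the bisector, intersected with a fixed cell, is the intersection of a single affine hyperplane with a polyhedron, and therefore contributes at most one $(d-1)$-dimensional facet of the bisector.

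The one point requiring care --- and the main obstacle --- is ruling out cells on which $f - g$ vanishes identically, which would make the bisector full-dimensional there rather than contributing a single facet. This can occur only when the two affine pieces share a gradient, i.e. when $i = j$ and $\sigma = \tau$. On such a \emph{matching} cell one computes $f - g = \sigma(x_i - u_i) - \sigma(x_i - v_i) = \sigma(v_i - u_i)$, a constant that is \emph{nonzero} precisely because the hypothesis guarantees $u_i \neq v_i$ for every $i$. Thus on every matching cell the bisector is empty, while on every other cell it is a single hyperplane piece. Summing over the at most $4d^2$ cells of the common refinement --- and observing that a facet spanning several cells is only overcounted, never undercounted, so distinct cells cannot create spurious facets --- the total number of facets of the bisector is $O(d^2)$, as claimed. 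I would close by noting that the same bookkeeping shows each facet lies in one of the $O(d^2)$ candidate hyperplanes $\{\vv{x} : \sigma(x_i - u_i) = \tau(x_j - v_j)\}$, which is the structural fact that the bound on Voronoi-face complexity in Lemma~\ref{lem:lec-in-higher-dimensions} relies upon.
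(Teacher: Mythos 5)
Your proof is correct and takes essentially the same route as the paper: your refinement cells $U_i^{\sigma} \cap V_j^{\tau}$ are exactly the regions the paper carves out with the inequalities $|x_k - u_k| \le \sigma(x_i - u_i)$ and $|x_k - v_k| \le \tau(x_j - v_j)$, and your candidate hyperplanes $\{\vv{x}: \sigma(x_i - u_i) = \tau(x_j - v_j)\}$ are the paper's $H_{i,j,\sigma,\tau}$. Both arguments invoke the hypothesis $u_i \neq v_i$ at the identical point, namely to rule out the degenerate matching case $i = j$, $\sigma = \tau$, and both arrive at the same $O(d^2)$ count of $(d-1)$-dimensional hyperplane pieces.
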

\begin{proof}
  Let $\vv{x}$ be a point in the bisector at $\metric{\infty}$ distance $r$ from $\vv{u}$ and $\vv{v}$. Pick coordinates $i$ and $j$ and signs $\sigma$ and $\tau$ in $\{-1, +1\}$ such that
  \begin{equation}
    \label{eq:bisector-facet}
    \sigma_i (x_i - u_i) = \tau_j ({x}_j - v_j) = r.
  \end{equation}
  Moreover, let us make this choice so that either $i \neq j$ or
  $\sigma_i \neq \tau_i$. This is always possible, since, otherwise,
  the assumption on $\vv{u}$ and $\vv{v}$ is violated. Then,
  \eqref{eq:bisector-facet} defines a hyperplane in $\RR^d$, namely
  $H_{i, j, \sigma, \tau} =\{\vv{x}: \sigma_i x_i - \tau_j x_j =
  \sigma_i u_i - \tau_j v_j\}$. Note that there are at most
  ${2d\choose 2} \in O(d^2)$ such hyperplanes, and each $\vv{x}$ in
  the bisector lies in at at least one of them. Moreover, a point $\vv{x}$
  in $H_{i,j,\sigma, \tau}$ lies in the bisector if and only if it
  satisfies the inequalities 
  \begin{align*}
    |x_k - u_k| &\le \sigma_i (x_i - u_i) \ \ \forall k \in [d],\\
    |x_k - v_k| &\le \tau_j ({x}_j - v_j) \ \ \forall k \in [d].\\
  \end{align*}
  Thus, the bisector is the union of $(d-1)$-dimensional convex
  polyhedra, one per each of the $O(d^2)$ hyperplanes $H_{i,j,\sigma,
    \tau}$. 
\end{proof}

\end{document}